\newif\ifproceedings\proceedingstrue
\newif\ifsinglecolumn 
\SetMathAlphabet{\mathcal}{normal}{OMS}{lmsy}{m}{n}
\SetMathAlphabet{\mathcal}{bold}{OMS}{lmsy}{m}{n}
\newtheorem{theorem}{Theorem}[section]
\newtheorem{lemma}[theorem]{Lemma}
\newenvironment{prettylist}{
\raggedright
\begin{list}{
    \footnotesize\raisebox{0.1mm}{\small\ding{118}}
}{
    \setlength\topsep{1ex}
    \setlength\leftmargin{32pt}
    \setlength\rightmargin{12pt}
    \setlength\itemsep{2pt}
    \setlength\parskip{0pt}
    \setlength\parsep{0pt}
    \setlength\itemindent{-15pt}
}
}{
\end{list}
}
\newcommand\bi{\begin{itemize}}
\newcommand\ei{\end{itemize}}
\newcommand\ben{\begin{enumerate}}
\newcommand\een{\end{enumerate}}
\newcommand{\trm}[1]{\textrm{#1}}
\newcommand{\tbf}[1]{\textbf{#1}}
\newcommand*\dash{\ifvmode\quitvmode\else\unskip\kern.16667em\fi---%
\hskip.16667em\relax}
\newcommand{\rscoin}{\textsf{RSCoin}\xspace}
\newcommand{\aquery}{\textsf{Query}\xspace}
\newcommand{\acommit}{\textsf{Commit}\xspace}
\newcommand{\aepoch}{\textsf{CloseEpoch}\xspace}
\newcommand{\secrscoin}{RSCOIN\xspace}
\newcommand{\secp}{\lambda}
\newcommand{\usecp}{1^{\secp}}
\newcommand{\randpick}{\xleftarrow{\$}}
\newcommand{\pk}{{\ensuremath{pk}}}
\newcommand{\sk}{{\ensuremath{sk}}}
\newcommand{\sig}{{\ensuremath{\sigma}}}
\newcommand{\inputs}{\mathsf{in}}
\newcommand{\outputs}{\mathsf{out}}
\newcommand{\sigkeygen}{\mathsf{Sig.KeyGen}}
\newcommand{\sigsign}{\mathsf{Sig.Sign}}
\newcommand{\sigverify}{\mathsf{Sig.Verify}}
\newcommand{\checktx}{\mathsf{CheckTx}}
\newcommand{\checknotdoublespent}{\mathsf{CheckNotDoubleSpent}}
\newcommand{\committx}{\mathsf{CommitTx}}
\newcommand{\addr}{\mathsf{addr}}
\newcommand{\addrid}{\mathsf{addrid}}
\newcommand{\block}{\mathsf{b}}
\newcommand{\centralblock}{\mathsf{B}}
\newcommand{\period}{\mathsf{period}}
\newcommand{\tx}{\mathsf{tx}}
\newcommand{\txspec}[3]{\mathsf{tx}({#1}\xrightarrow{{#2}}{#3})}
\newcommand{\epoch}{\mathsf{epoch}}
\newcommand{\bank}{\mathsf{bank}}
\newcommand{\mintette}{\mathsf{m}}
\newcommand{\owners}{\mathsf{owners}}
\newcommand{\bundle}{\mathsf{bundle}}
\newcommand{\seq}{\mathit{seq}}
\newcommand{\periodtxset}{\mathsf{pset}}
\newcommand{\txset}{\mathsf{txset}}
\newcommand{\utxo}{\mathsf{utxo}}
\newcommand{\txindex}[2]{\mathsf{index}_{{#1}}({#2})}
\newcommand{\periodpk}[1]{\addr_\bank^{({#1})}}
\newcommand{\periodmintette}[1]{\mathsf{DPK}_{{#1}}}
\newcommand{\mintettesig}{\sig_\bank^{(\mintette)}}
\newcommand{\mintetteset}{\mathsf{mset}}
\newcommand{\centralblockspec}[2]{\centralblock_{{#1}}^{(#2)}}
\newcommand{\otherblocks}{\mathsf{otherblocks}}
\newcommand{\currencypk}[2]{\pk_{#1}^{(#2)}}
\newcommand{\currencysk}[2]{\pk_{#1}^{(#2)}}
\newcommand{\multiaddr}{\mathsf{multiaddr}}
\newcommand{\spendtx}{\mathsf{SpendTx}}
\newcommand{\refundtx}{\mathsf{RefundTx}}
\title{Centrally Banked Cryptocurrencies}
\author{\IEEEauthorblockN{George Danezis}
\IEEEauthorblockA{University College London\\
\url{g.danezis@ucl.ac.uk}}
\and 
\IEEEauthorblockN{Sarah Meiklejohn}
\IEEEauthorblockA{University College London\\
\url{s.meiklejohn@ucl.ac.uk}}
}
\begin{document}

\IEEEoverridecommandlockouts
\makeatletter\def\@IEEEpubidpullup{9\baselineskip}\makeatother
\IEEEpubid{\parbox{\columnwidth}{Permission to freely reproduce all or part
    of this paper for noncommercial purposes is granted provided that
    copies bear this notice and the full citation on the first
    page. Reproduction for commercial purposes is strictly prohibited
    without the prior written consent of the Internet Society, the
    first-named author (for reproduction of an entire paper only), and
    the author's employer if the paper was prepared within the scope
    of employment.  \\
    NDSS '16, 21-24 February 2016, San Diego, CA, USA\\
    Copyright 2016 Internet Society, ISBN 1-891562-41-X\\
    http://dx.doi.org/10.14722/ndss.2016.23187
}
\hspace{\columnsep}\makebox[\columnwidth]{}}

\maketitle

\begin{abstract}

Current cryptocurrencies, starting with Bitcoin, build a decentralized
blockchain-based transaction ledger, maintained through proofs-of-work that
also serve to generate a monetary supply.  Such decentralization has benefits,
such as
independence from national political control, but also significant limitations
in terms of computational costs and scalability.
We introduce \rscoin, a cryptocurrency framework in which
central banks maintain complete
control over the monetary supply, but rely on a distributed set of
authorities, or \emph{mintettes}, to prevent
double-spending.  While monetary policy is centralized, \rscoin still
provides strong transparency and auditability guarantees.
We demonstrate, both theoretically and experimentally, the benefits of
a modest degree of centralization, such as the elimination of wasteful
hashing and a scalable system for avoiding double-spending attacks.
\end{abstract}

\section{Introduction}


Bitcoin~\cite{satoshi-bitcoin}, introduced in 2009, and
the many alternative cryptocurrencies it has inspired (e.g., Litecoin and
Ripple), have achieved enormous success:
financially, in November 2015, Bitcoin held a market capitalization of $4.8$
billion USD and $30$ cryptocurrencies held a market capitalization of
over $1$ million USD.  In terms of visibility, cryptocurrencies have been
accepted as a
form of payment by an increasing number of international merchants, such as
the 150,000 merchants using either Coinbase or Bitpay as a payment gateway
provider.

Recently, major financial institutions such as JPMorgan Chase~\cite{jpmorgan}
and Nasdaq~\cite{nasdaq} have
announced plans to develop blockchain technologies. The potential impacts of
cryptocurrencies have now been
acknowledged even by government institutions: the European Central Bank
anticipates their ``impact on monetary policy and price
stability''~\cite{ecb-on-bitcoin}; the US Federal Reserve
their ability to provide a ``faster, more secure and more efficient payment
system''~\cite{fed-on-bitcoin}; and the UK Treasury vowed to ``support
innovation''~\cite{uk-on-banking} in this space. This is unsurprising, since
the financial settlement systems currently in use by
central banks (e.g., CHAPS, TARGET2, and Fedwire) remain relatively expensive
and\dash at least behind the scenes\dash have high latency and are stagnant
in terms of innovation.

Despite their success, existing cryptocurrencies suffer from a
number of limitations.  Arguably the most troubling one is their poor
scalability: the Bitcoin network (currently by far the most heavily used) can
handle at most $7$ transactions per
second\footnote{\url{http://en.bitcoin.it/wiki/Scalability}} and faces
significant challenges in raising this rate much
higher,\footnote{\url{http://en.bitcoin.it/wiki/Blocksize_debate}} whereas
PayPal handles over $100$ and Visa handles on average anywhere from 2,000 to
7,000.  This lack of scalability is
ultimately due to its reliance on broadcast and the need to expend significant
computational energy in proofs-of-work \dash by
some estimates~\cite[Chapter 5]{bitcoin-textbook}, comparable to the power
consumption of a large power plant\dash in order
to manage the transaction ledger and make double-spending attacks
prohibitively expensive.
Alternative cryptocurrencies such as Litecoin try to distribute this cost,
and Permacoin~\cite{permacoin} tries to repurpose the computation,
but ultimately neither of these solutions eliminates the costs. A second key
limitation of current cryptocurrencies is the loss of control over monetary
supply, providing little to no flexibility for macroeconomic
policy and extreme volatility in their value as currencies.


Against this backdrop, we present \rscoin, a cryptocurrency framework that
decouples the generation of the monetary supply from the maintenance of the
transaction ledger.  Our design decisions were largely motivated by the
desire to create a more scalable cryptocurrency, but were also inspired by the
research agenda of the Bank of England~\cite{boe-agenda},
and the question of ``whether central banks should themselves make use of such
technology to issue digital currencies.'' Indeed, as Bitcoin becomes
increasingly widespread, we expect that this will be a question of interest
to many central banks around the world.

\rscoin's radical shift from traditional cryptocurrencies is to centralize
the monetary supply. Every
unit of a particular currency is created by a particular central bank, making
cryptocurrencies based on \rscoin significantly more palatable to
governments.  Despite this centralization, \rscoin still provides the
benefit over existing (non-crypto) currencies of a transparent transaction
ledger, a distributed system for maintaining it, and a globally visible
monetary supply. This makes monetary policy transparent, allows direct access
to payments and value transfers, supports pseudonymity, and benefits from
innovative uses of blockchains and digital money.


Centralization of the monetary authority also allows \rscoin to address some
of the scalability issues of fully decentralized cryptocurrencies.
In particular, as we describe in
Section~\ref{sec:system}, the central bank delegates
the authority of validating transactions to a number of other institutions
that
we call \emph{mintettes} (following Laurie~\cite{lauriecoin}).  Since
mintettes are\dash unlike traditional cryptocurrency miners\dash known
and may ultimately be held accountable for any misbehavior, \rscoin
supports a simple and fast mechanism for double-spending detection. As
described
in Section~\ref{sec:consensus}, we adapt a variant
of Two-Phase Commit, optimized to ensure the integrity of a transaction
ledger. Thus, we achieve a significantly more scalable system: the modest
experimental testbed that we describe in Section~\ref{sec:impl} (consisting
of only $30$ mintettes running a basic Python implementation of our consensus
mechanism), can process over 2,000 transactions per second, and performance
scales linearly as we increase the number of mintettes.  Most transactions
take less than one second to clear, as compared to many minutes in traditional
cryptocurrency designs.

Beyond scalability, recent issues in the Bitcoin network have demonstrated
that the incentives of miners may be
misaligned,\footnote{https://bitcoin.org/en/alert/2015-07-04-spv-mining} and
recent research suggests that this problem\dash namely, that miners are
incentivized to produce blocks without fully validating all the transactions
they contain\dash is only exacerbated in other
cryptocurrencies~\cite{verifiers-dilemma}.  We therefore discuss
in Section~\ref{sec:role-bank} how mintettes may collect fees for good
service, and how such fees may be withheld from misbehaving or idle mintettes;
our hope is that this framework can lead to a more robust set of incentives.
In a real deployment of \rscoin, we furthermore expect
mintettes to be institutions with an existing relationship to the
central bank, such as commercial banks, and thus to have some existing
incentives to perform this service.

The ultimate goal for \rscoin is to achieve not only a scalable cryptocurrency
that can be deployed and whose supply can be controlled by one central bank,
but a framework that allows \emph{any} central bank to deploy their own
cryptocurrency. In fact, there is interest~\cite{boe-private} to allow other
entities to not only issue instruments that hold value (such as shares and
derivative products), but to furthermore allow some visibility into
transactions
concerning them.  With this in mind, we discuss in
Section~\ref{sec:multiple-banks} what is needed to support some notion of
interoperability between different deployments of \rscoin, how different
currencies can be exchanged in a transparent and auditable way, and how
various considerations\dash such as a pair of central banks that, for either
security or geopolitical reasons, do not support each other\dash can be
resolved without fragmenting the global monetary system.  We also discuss
other extensions and optimizations in Section~\ref{sec:extensions}.

\section{Related Work}

Much of the research on cryptocurrencies either has analyzed the extent to
which existing properties (e.g., anonymity and fairness) are satisfied or has
proposed new methods to improve certain features.  We focus on those works
that are most related to the issues that we aim to address, namely
stability and scalability.

The work on these two topics has been largely attack-based, demonstrating that
even Bitcoin's heavyweight mechanisms do not provide perfect solutions.  As
demonstrated by Eyal and Sirer~\cite{selfish-mining} and Garay et
al.~\cite{EC:GarKiaLeo15}, an attacker can temporarily withhold blocks and
ultimately undermine fairness.  Babaioff et al.~\cite{babaioff} argued that
honest participation in the Bitcoin network was not sufficiently incentivized,
and Johnson et al.~\cite{johnson} and Laszka et al.~\cite{laszka}
demonstrated that in fact some participants might be incentivized to engage in
denial-of-service attacks against each other.  Karame et
al.~\cite{kac:bitcoin:ccs12} and Rosenfeld~\cite{meni:double-spending:2012}
consider how an adversary might take advantage of both mining power and
the network topology to execute a double-spending attack.  Finally, Gervais et
al.~\cite{is-bitcoin-decentralized} looked at the structure of mining pools,
the rise of SPV clients, and the privileged rights of Bitcoin developers and
concluded that Bitcoin was far from achieving full decentralization.  On
the positive side, Kroll et al.~\cite{kroll} analyzed a simplified model of
the
Bitcoin network and concluded that Bitcoin is (at least weakly) stable.

In terms of other constructions, the work perhaps most related to our own
is Laurie's approach of designated authorities~\cite{lauriecoin}.  This
solution, however, does not describe a consensus mechanism or consider a
centralized entity responsible for the generation of a monetary
supply.  The \rscoin framework is also related to the approaches adopted by
Ripple and Stellar, in that the underlying consensus
protocols~\cite{ripple-consensus,stellar-consensus} used by all
three sit somewhere between a fully decentralized setting\dash
in which proof-of-work-based ``Nakamoto consensus''~\cite{bitcoin-sok} has
thus far been adopted almost unilaterally\dash and a
fully centralized setting (in which consensus is trivial).  Within this space,
\rscoin makes different trust assumptions and thus ends up with
different features: both the Stellar and Ripple consensus protocols avoid a
central point of trust, but at the cost of
needing a broadcast channel (because the list of participants is not fixed a
priori) and requiring servers to be in constant direct communication, whereas
our use of a central bank\dash
which, leaving aside any scalability benefits, is ultimately one of the main
goals of this work\dash allows us to avoid both broadcast channels (because
the set of mintettes is known and thus users can contact them directly) and
direct communication between mintettes.

Finally, our approach borrows ideas from a number of industrial solutions.
In particular, our two-layered approach to the blockchain is in part inspired
by the Bitcoin startup Factom, and our consensus mechanism is in part inspired
by Certificate Transparency~\cite{DBLP:journals/cacm/Laurie14}.
In particular, \rscoin, like Certificate Transparency, uses designated
authorities and relies on
transparency and auditability to ensure integrity of a ledger, rather
than full trust in a central party.

\section{Background}\label{sec:background}

\begin{table}[t]
\small
\centering
\begin{tabular}{lccc|c}
\toprule
 & CC & e-cash & Bitcoin & \rscoin\\
\midrule
Double-spending & online & offline & online & online\\
Money generation & C & C & D & C\\
Ledger generation & C & n.a. & D & D*\\
Transparent & no & no & yes & yes\\
Pseudonymous & no & yes & yes & yes\\
\bottomrule
\end{tabular}
\caption{How existing approaches (credit cards, cryptographic e-cash, and
Bitcoin) and how \rscoin compare in terms of the properties they provide.
Double-spending refers to the way the system detects double-spending (i.e.,
as it happens or after the fact).  C stands for centralized, D for
decentralized, and D* for distributed.}
\label{tab:approaches}
\end{table}

In this section, we present a brief background on Bitcoin and traditional
cryptocurrencies, and introduce some relevant notation.  Since \rscoin
adopts properties of other online
payment systems, such as those of credit cards and cryptographic
e-cash, we highlight some of the advantages and disadvantages of each of
these approaches in Table~\ref{tab:approaches}.

\subsection{The Bitcoin protocol}

Bitcoin is a decentralized cryptocurrency introduced in a whitepaper in
2008~\cite{satoshi-bitcoin} and deployed on January 3 2009.  Since then,
Bitcoin has achieved success and
has inspired a number of alternative cryptocurrencies (often dubbed
``altcoins'') that are largely based on the same blockchain technology.
The novelty of this blockchain technology is that it fulfills the two key
requirements of a currency\dash the generation of a monetary supply and the
establishment of a transaction ledger\dash in a completely decentralized
manner: a global peer-to-peer network serves both to generate new units of
currency and to bear witness to the transfer of existing units from one party
to another through transaction broadcast and computational proof-of-work
protocols.

To highlight the differences between \rscoin and fully decentralized
cryptocurrencies such as Bitcoin, we sketch the main operations and entities
of these blockchain-based currencies; for a more comprehensive overview, we
refer the reader to Bonneau et al.~\cite{bitcoin-sok}.  Briefly, users can
generate signing keypairs and use the public key as a
\emph{pseudonym} or \emph{address} in which to store some units of the
underlying cryptocurrency. To transfer the value stored in
this address to the address of another user, he creates a
\emph{transaction}, which is cryptographically signed using the secret key
associated with this address. More generally, transactions can transfer
value from $m$ input addresses to $n$ output addresses, in which case the
transaction must be signed by the secret keys associated with each of
the input addresses.

Once a user has created a transaction, it is broadcast to his peers in the
network, and eventually reaches \emph{miners}.  A miner seals the transaction
into the global ledger by including it in a pool of transactions, which she
then
hashes\dash along with some metadata and, crucially, a nonce\dash to attempt
to produce a hash below a target value (defined by the difficulty of the
network).  Once a miner is successful in producing such a hash, she broadcasts
the pool of transactions and its associated hash as a \emph{block}.  Among the
metadata for a block is a reference to the previously mined block, allowing
the
acceptance of the miner's block into the \emph{blockchain} to be signaled by
the broadcast of another block with a reference to hers (or, in practice, many
subsequent blocks).  Miners are incentivized by two rewards: the collection
of optional fees in individual transactions, and a
system-specific mining reward (e.g., as of November 2015,
Bitcoin's mining reward of 25~BTC).  These rewards are collected in a special
\emph{coin generation} transaction that the miner includes in her
block's pool of transactions.
Crucially, blocks serve to not only generate the monetary supply (via the
mining rewards included in each block), but also to provide a partial ordering
for transactions: transactions in one block come before transactions included
in any block further along the blockchain.  This allows all users in
the network to eventually impose a global (partial) ordering on transactions,
and thus thwart double-spending by maintaining a list of
\emph{unspent transaction outputs}
and validating a transaction only if its input addresses appear in this list.

What we have described above is the typical way of explaining Bitcoin at
a high level, but we mention that in reality, bitcoins are not ``stored'' in
an address or ``sent''; instead, the sender relinquishes control
by broadcasting a transaction that re-assigns to the recipient's address
the bitcoins previously associated with that of the sender.  An input to a
transaction is thus not an address but a (signed) script that specifies an
index in a previous transaction in which some bitcoins were received; this
\emph{address identifier} uniquely identifies one particular usage of an
address, which becomes important as addresses are reused.  In what
follows, we thus frequently use the notation for an address and for a
transaction-index pair interchangeably.

\subsection{Notation}\label{sec:notation}

We denote a hash function as $H(\cdot)$ and a signature scheme as the tuple
$(\sigkeygen,\sigsign,\sigverify)$, where these algorithms behave as follows:
via $(\pk,\sk)\randpick\sigkeygen(\usecp)$ one generates a signing keypair;
via $\sig\randpick\sigsign(\sk,m)$ one generates a signature; and via
$0/1\gets\sigverify(\pk,m,\sig)$ one verifies a signature on a message.

We use $\addr$ to denote an address; this is identical to a
public key $\pk$ in terms of the underlying technology,\footnote{Or, as in
    Bitcoin, it may be some hashed version of the public key.}
but we use the separate term to disambiguate between usage in a transaction
(where we use $\addr$) and usage as a signature verification key (where we use
$\pk$).
We use $\txspec{\{\addr_i\}_i}{n}{\{\addr_j\}_j}$ to denote a transaction in
which $n$ units of currency are sent from $\{\addr_i\}_i$ to $\{\addr_j\}_j$.
Each usage of an address $\addr$ can be uniquely identified by the tuple
$\addrid=(\tx,\txindex{\tx}{\addr},v)$, where $\tx$ is the hash of the
transaction in which it received some value $v$, and $\txindex{\tx}{\addr}$ is
the index of $\addr$ in the list of outputs.  When
we use these \emph{address identifiers} later on, we occasionally omit
information (e.g., the value $v$) if it is already implicit or unnecessary.
%

\section{An Overview of \secrscoin}\label{sec:overview}

In this section, we provide a brief overview of \rscoin, which will be useful
for understanding both its underlying consensus algorithm (presented in
Section~\ref{sec:consensus}) and the composition of the system as a whole
(presented in Section~\ref{sec:system}).

At a high level, \rscoin introduces a degree of centralization into the two
typically decentralized components of a blockchain-based ledger: the
generation of the monetary supply and the constitution of the transaction
ledger.
In its simplest form, the \rscoin system assumes two structural entities: the
\emph{central bank}, a centralized entity that ultimately has complete control
over the generation of the monetary supply, and a distributed set of
\emph{mintettes} (following Laurie~\cite{lauriecoin}) that are responsible for
the maintenance of the transaction ledger.  The interplay between these
entities\dash and an overview of \rscoin as a whole\dash can be seen in
Figure~\ref{fig:structure}.

\begin{figure}[t]
\centering
\ifsinglecolumn{
\includegraphics[width=0.5\textwidth]{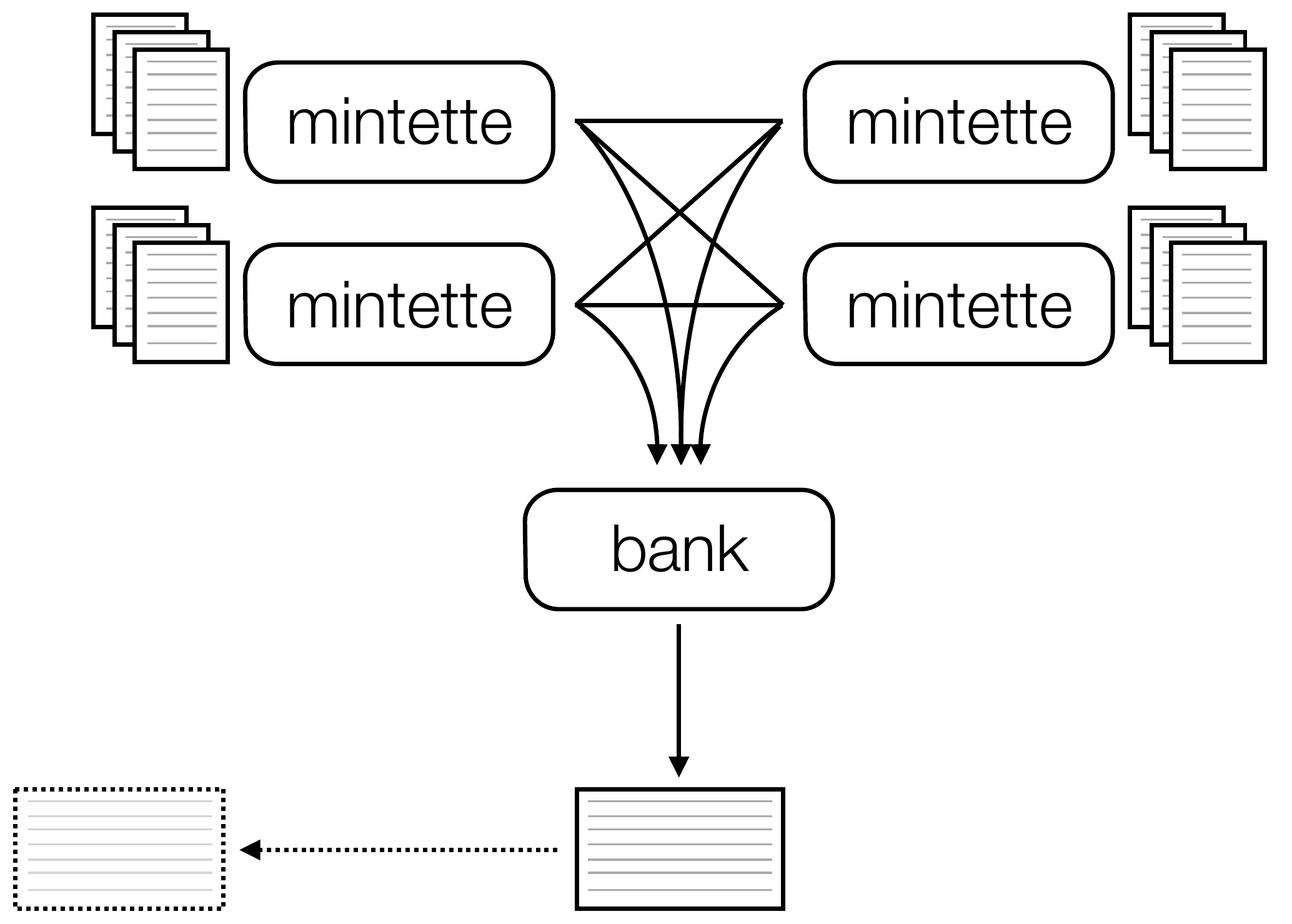}
}\else{
\includegraphics[width=\linewidth]{structure.pdf}
}\fi
\caption{The overall structure of \rscoin.  Each mintettes maintains a set of
lower-level blocks, and (possibly) communicates with other mintettes (either
directly or indirectly).  At some point, the mintettes send these blocks to
the central bank, which produces a higher-level block.  It is these
higher-level blocks that form a chain and that are visible to external users.}
\label{fig:structure}
\end{figure}

Briefly, mintettes collect transactions from users and collate them into
blocks, much as is done with traditional cryptocurrencies.  These mintettes
differ from traditional cryptocurrency miners, however, in a crucial way:
rather than performing some computationally difficult task, each mintette is
simply authorized by the central bank to collect transactions.  In \rscoin,
this authorization is accomplished by a PKI-type functionality, meaning the
central bank signs the public key of the mintette, and each lower-level block
must contain one of these signatures in order to be considered valid.
We refer to the time interval in which blocks are produced by mintettes as an
\emph{epoch}, where the length of an epoch varies depending on the mintette.
Because these blocks are not ultimately incorporated into the main blockchain,
we refer to them as \emph{lower-level blocks}.  Mintettes are collectively
responsible for producing a consistent ledger, and thus to facilitate this
process they communicate internally throughout the course of an epoch\dash
in an indirect manner described in Section~\ref{sec:consensus}\dash and
ultimately reference not only their own previous blocks but also the previous
blocks of each other.  This means that these lower-level blocks form a
(potentially) \emph{cross-referenced} chain.

At the end of some longer pre-defined time interval called a \emph{period},
the mintettes present their blocks to the central bank, which merges these
lower-level blocks to form a consistent history in the form of a new block.
This \emph{higher-level block} is what is ultimately incorporated into the
main blockchain, meaning a user of \rscoin need only keep track of
higher-level blocks. (Special users wishing to audit the behavior of the
mintettes and the central bank, however, may keep track of lower-level blocks,
and we describe in Section~\ref{sec:auditability} ways to augment lower-level
blocks to improve auditability.)

Interaction with \rscoin can thus be quite
similar to interaction with existing cryptocurrencies, as the structure of its
blockchain is nearly identical, and users can create new pseudonyms and
transactions in the same way as before.  In fact, we stress that \rscoin is
intended as a framework rather than a stand-alone cryptocurrency, so one could
imagine incorporated techniques from various existing cryptocurrencies in
order to achieve various goals.  For example, to ensure privacy for
transactions, one could adapt existing cryptographic techniques such as those
employed by Zerocoin~\cite{DBLP:conf/sp/MiersG0R13},
Zerocash~\cite{DBLP:conf/sp/Ben-SassonCG0MTV14}, Pinocchio
Coin~\cite{DBLP:conf/ccs/DanezisFKP13}, or Groth and
Kohlweiss~\cite{DBLP:conf/eurocrypt/GrothK15}.  As these goals are somewhat
orthogonal to the goals of this paper, we leave a comprehensive exploration of
how privacy-enhancing and other techniques can be combined with \rscoin as an
interesting avenue for future work.

\section{Achieving Consensus}\label{sec:consensus}

\begin{figure}[t]
\centering
\ifsinglecolumn{
\includegraphics[width=0.6\textwidth]{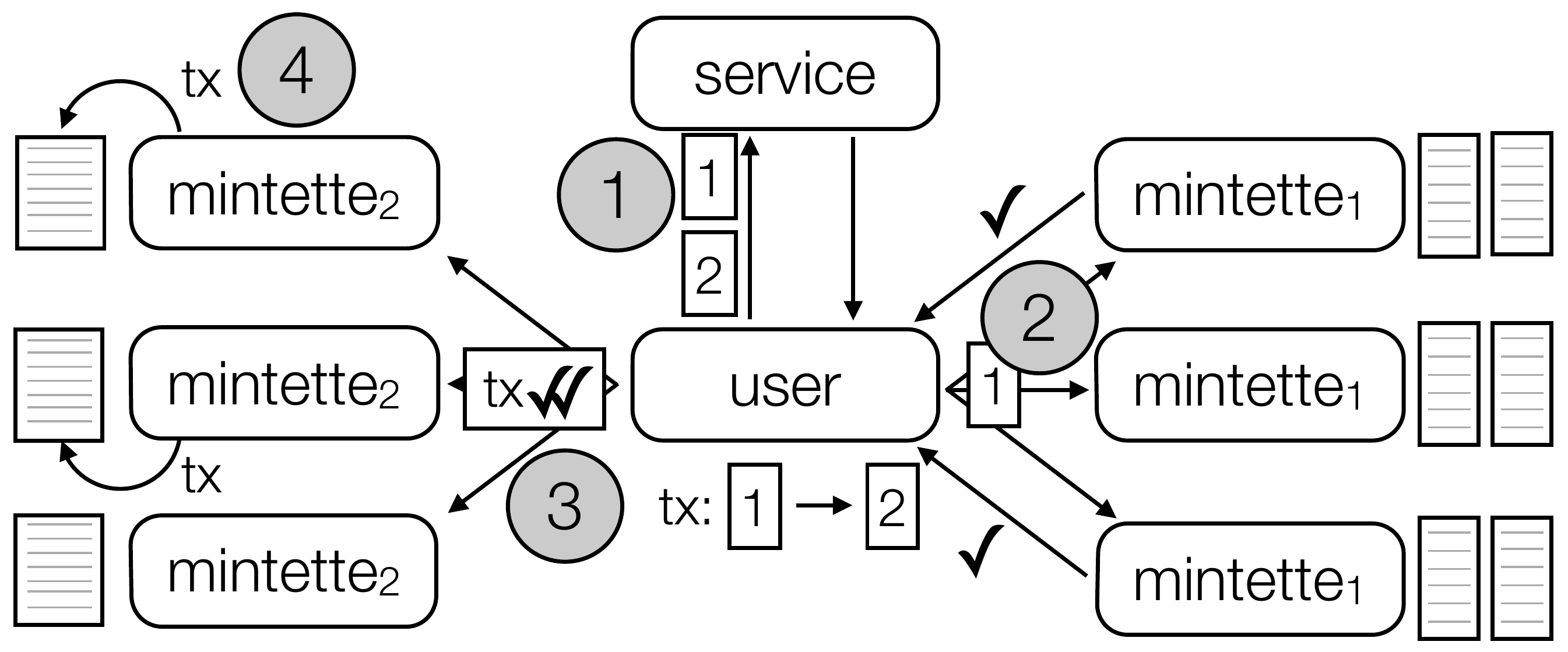}
}\else{
\includegraphics[width=\linewidth]{txvalidation.pdf}
}\fi
\caption{The proposed protocol for validating transactions; each mintette
$\mintette_i$ is an owner of address $i$.  In (1), a user learns the owners of
each of the addresses in its transaction.  In (2), the user collects approval
from a majority of the owners of the input addresses.  In (3), the user sends
the transaction and these approvals to the owners of the transaction
identifier.  In (4), some subset of these mintettes add the transaction to
their blocks.}
\label{fig:txvalidation}
\end{figure}

In the previous section, we described how mintettes send so-called
``lower-level blocks'' to the central bank at the end of a period.  In this
section, we describe a consensus protocol by which these blocks can already
be made consistent when they are sent to the central bank, thus ensuring that
the overall system remains scalable by allowing the central bank to do
the minimal work necessary.

As described in the introduction, one of the major benefits of centralization
is that, although the generation of the transaction ledger is still
distributed, consensus on valid transactions can be reached in a way that
avoids the wasteful proofs-of-work required by existing cryptocurrencies.
In traditional cryptocurrencies, the set of miners is neither known
nor trusted, meaning one has no choice but to broadcast a transaction to the
entire network and rely on proof-of-work to defend against Sybil attacks.
Since our mintettes are in fact authorized by the central
bank, and thus both known and\dash because of their accountability\dash
trusted to some extent, we can avoid the heavyweight consensus requirement of
more fully decentralized cryptocurrencies and instead use an adapted version
of Two-Phase Commit (2PC), as presented in Figure~\ref{fig:txvalidation}. A
generic consensus protocol, ensuring total ordering of
transactions, is not necessary for double-spending prevention; instead, a
weaker property\dash namely that any transaction output features as a
transaction input in at most one other transaction\dash is sufficient.
\rscoin builds its consensus protocol for double-spending prevention based on
this insight.

We begin by describing a threat model for the consensus protocol before going
on to present a basic protocol that achieves consensus on transactions
(Section~\ref{sec:basic-consensus}), an augmented protocol that allows for
auditability of both the mintettes and the central bank
(Section~\ref{sec:auditability}), and a performance evaluation
(Section~\ref{sec:performance}).

\subsection{Threat model and security properties}\label{sec:security}

We always assume that the central bank is
honest\dash although we describe in Section~\ref{sec:role-bank} ways to detect
certain types of misbehavior on the part of the bank\dash and that the
underlying cryptography is secure; i.e., no parties
may violate the standard properties offered by the hash function and
digital signature.  Honest mintettes follow the protocols as specified,
whereas
dishonest mintettes may behave arbitrarily; i.e., they may deviate from the
prescribed protocols, and selectively or broadly ignore requests from
users.  Finally, honest users create only valid transactions (i.e., ones in
which they own the input addresses and have not yet spent their contents),
whereas dishonest users may try to double-spend or otherwise subvert the
integrity of \rscoin.

We consider two threat models.  Our first threat model assumes that each
transaction is processed by a set of mintettes with an honest majority; this
is different from assuming that a majority of all mintettes are honest, as we
will see in our description of transaction processing in
Section~\ref{sec:basic-consensus}.
Our second threat model assumes that no mintette is honest, and that
mintettes may further collude to violate the integrity of \rscoin. This
is a very hostile setting, but we show that some security
properties still hold for honest users.  Additionally, we show that mintettes
that misbehave in certain ways can be detected and ultimately held
accountable, which may serve as an incentive to follow the protocols
correctly. 

In the face of these different adversarial settings, we try to satisfy at
least some of the following key integrity properties:

\begin{prettylist}
\item{\tbf{No double-spending}:} Each output address of a valid transaction
will
only ever be associated with the input of at most one other valid transaction.

\item{\tbf{Non-repudiable sealing}:} The confirmation that a user receives
from a mintette\dash which promises that a transaction will be included in
the ledger\dash can be used to implicate that mintette if the
transaction does not appear in the next block.

\item{\tbf{Timed personal audits}:} A user can, given access to the
lower-level
blocks produced within a period, ensure that the implied behavior of a
mintette matches the behavior observed at the time of any previous
interactions
with that mintette.

\item{\tbf{Universal audits}:} Anyone with access to the lower-level blocks
produced within a period can audit all transactions processed by all
mintettes. In particular, mintettes cannot retroactively modify, omit, or
insert transactions in the ledger.

\item{\tbf{Exposed inactivity}:} Anyone with access to the lower-level blocks
produced within a period can observe any mintette's substantial absence from
participation in the 2PC protocol. (In particular, then, a mintette cannot
retroactively act to claim transaction fees for services not provided in a
timely manner.)
\end{prettylist}

To see how to satisfy these security properties, we first present our basic
consensus protocol in Section~\ref{sec:basic-consensus}, and then present in
Section~\ref{sec:auditability} ways to augment this protocol to achieve
auditability.  We then prove that at least some subset of these security
properties can be captured in both our threat models, and that exposure may
disincentive mintettes from violating those that we cannot capture directly.

\subsection{A basic consensus protocol}\label{sec:basic-consensus}

To begin, the space of possible transaction identifiers is divided
so that each mintette $\mintette$ is responsible for some subset,
or ``shard.''  For reliability and security, each shard is covered by
(potentially) multiple mintettes, and everyone is aware of the owner(s) of
each.

We use $\owners(\addrid)$ to denote the set of mintettes responsible for
$\addrid$.  Recall that $\addrid=(\tx,i,v)$, where $\tx$ specifies the
transaction in which $\addr$, at sequential output $i$, received value $v$. We
map each $\addrid$ to a shard using $\tx$
by hashing
a canonical representation of the transaction.  As a result, all input
$\addrid$ in a transaction
may have different owners (because the addresses may have appeared as an
output
in different transactions), but all output $\addrid$ have the same owner
(because they are all appearing as an output in the same transaction).  For
simplicity, we therefore use the notation $\owners(S_\outputs)$ below
(where $S_\outputs$ is the list of output addresses for a transaction).

In each period, each mintette $\mintette$ is responsible for maintaining two
lists concerning only the $\addrid$ (and indirectly the transactions $\tx$) it
owns: a list of unspent transaction outputs, denoted $\utxo$, and two lists of
transactions seen thus far in the period, denoted $\periodtxset$ and $\txset$
respectively (the former is used to detect double-spending, and the latter is
used to seal transactions into the ledger).  The $\utxo$
list is of the form $\addrid\mapsto(\addr,v)$, where
$(\addrid\mapsto(\addr,v))\in\utxo$ indicates that $\addrid$ had not acted as
an
input address at the start of the period but has since sent value $v$ to
$\addr$
and $(\addrid\mapsto(\bot,\bot))\in\utxo$ indicates that $\addrid$ has not yet
spent its
contents.  The $\periodtxset$ list is of the form $\addrid\mapsto\tx$, where
$(\addrid\mapsto\tx)\in\periodtxset$ indicates that $\addrid$ has acted as an
input address in transaction $\tx$.  We assume
that each mintette starts the period with an accurate $\utxo$ list (i.e.,
all transactions within the mintette's shard in which the outputs have not yet
been spent) and with an empty $\periodtxset$.

At some point in the period, a user creates a transaction.  The
user\footnote{We refer to the user here and in the sequel, but in
    practice this can all be done by the underlying client, without any
    need for input from the (human) user.}
can now run Algorithm~\ref{alg:validate-tx}.\footnote{All algorithms are
    assumed to be executed atomically and sequentially by each party, although
    as we demonstrate in Section~\ref{sec:impl}, implementing them
    using optimistic locking is possible to increase parallelism and
    efficiency.}

\begin{algorithm}
\SetKwComment{Comment}{//}{}
\DontPrintSemicolon
\KwIn{a transaction $\txspec{S_\inputs}{n}{S_\outputs}$ and period
identifier $j$}
$\bundle\gets\emptyset$\;
\Comment{first phase: collect votes}
\ForAll{$\addrid\in S_\inputs$}{
$M\gets \owners(\addrid)$\;

\ForAll{$\mintette\in M$}{
$(\pk_\mintette,\sig)\gets\checknotdoublespent(\tx,\addrid,\mintette)$\;
    \label{alg:validate:vote}
\If{$(\pk_\mintette,\sig)=\bot$}{
\KwRet $\bot$\;
}
\Else
{
$\bundle\gets \bundle\cup\{((\mintette,\addrid)\mapsto(\pk_\mintette,\sig))\}$
}
}
}
\Comment{second phase: commit}
$M\gets\owners(S_\outputs)$\;
\ForAll{$\mintette\in M$}{
$(\pk_\mintette,\sig)\gets\committx(\tx,j,\bundle,\mintette)$
    \label{alg:validate:commit}
}
\caption{Validating a transaction, run by a user}
\label{alg:validate-tx}
\end{algorithm}

In the first phase, the user asks the relevant mintettes to ``vote'' on the
transaction; i.e., to decide if its input addresses have not already been
used, and thus certify that no double-spending is taking place.  To do this,
the user
determines the owners for each input address, and sends the
transaction information to these mintettes, who each run
Algorithm~\ref{alg:checknotdoublespent}.
We omit for simplicity the formal description of an algorithm $\checktx$
that, on input a transaction, checks that the basic structure of the
transaction is valid; i.e., that the collective
input value is at least equal to the collective output value, that the input
address identifiers point to valid previous transactions, and that the
signatures
authorizing previous transaction outputs to be spent are valid.

\begin{algorithm}
\SetKwComment{Comment}{//}{}
\DontPrintSemicolon
\KwIn{a transaction $\tx_c$, an address identifier $\addrid=(\tx,i)$ and
a mintette identifier $\mintette$}
\If{$\checktx(\tx_c) = 0$ {\bf or} $\mintette \notin \owners(\addrid)$ }{
\KwRet $\bot$
}
\Else{
\If{$(\addrid\in\utxo_\mintette)$ {\bf or}
$((\addrid\mapsto\tx_c)\in\periodtxset_\mintette)$}{
$\utxo_\mintette\gets\utxo_\mintette\setminus\{\addrid\}$\;
$\periodtxset_\mintette\gets\periodtxset_\mintette\cup\{(\addrid\mapsto\tx_c)\}$\;
\label{alg:vote:update}
\KwRet $(\pk_\mintette,\sigsign(\sk_\mintette,(\tx_c,\addrid)))$
    \label{alg:vote:send}
} \Else {
\KwRet $\bot$
}
}
\caption{$\checknotdoublespent$, run by a mintette}
\label{alg:checknotdoublespent}
\end{algorithm}

Briefly, in Algorithm~\ref{alg:checknotdoublespent} the mintette first checks
if the current transaction is valid and if
the address is within its remit, and returns $\bot$ otherwise.   It
then proceeds if the address identifier either has not been spent
before (and thus is in $\utxo$), or if it has already
been associated with the given transaction (and thus the pair is in
$\periodtxset$). In those cases, it removes the
address identifier from $\utxo$ and associates it with the transaction
in $\periodtxset$; these actions are
idempotent and can be safely performed more than once. The mintette then
returns a signed acknowledgment to the user.  If instead another transaction
appears in $\periodtxset$ associated with the address identifier, then
the address is acting as an input in two different transactions\dash i.e.,
it is double-spending\dash and the mintette returns $\bot$.  It may also store
the two transactions to provide evidence of double spending.

At the end of the first phase, an honest user will have received some
signatures (representing `yes' votes) from the owners of the input addresses
of the new transaction.  Users should check the signatures
returned by these mintettes and immediately return a failure if any is
invalid.
Once the user has received signatures from at least a
majority of owners for each input, she can now send the transaction,
coupled with a ``bundle of evidence'' (consisting of the signatures of the
input mintettes) to represent its validity, to the owners of the
output addresses (who, recall, are the same for all output addresses). These
mintettes then run Algorithm~\ref{alg:committx}.

\begin{algorithm}[t]
\SetKwComment{Comment}{//}{}
\DontPrintSemicolon
\KwIn{a transaction $\txspec{S_\inputs}{n}{S_\outputs}$, a period identifier
$j$, a bundle of evidence $\bundle=\{((\mintette_i,\addrid_i)
\mapsto(\pk_i,\sig_i))\}_i$, and a mintette identifier $\mintette$}
\If{$\checktx(\tx) = 0$ {\bf or} $\mintette \notin \owners(S_\outputs)$}{
\KwRet $\bot$
}
\Else{
$d\gets 1$\;
\ForAll{$\addrid\in S_\inputs$}{
\ForAll{$\mintette'\in \owners(\addrid)$}{
\If{$(\mintette',\addrid)\in\bundle$}{
$(\pk,\sig)\gets\bundle[(\mintette', \addrid)]$\;
$d'\gets d\land H(\pk)\in\periodmintette{j}\break
~~~~~~~~~~\land \sigverify(\pk,(\tx,\addrid),\sig)$\;
} \Else {
$d\gets 0$
}
}
}
\If{$d = 0$}{
\KwRet $\bot$
} \Else {
$\utxo_\mintette\gets \utxo_\mintette\cup S_\outputs$\;
    \label{alg:commit:utxo}
$\txset_\mintette\gets \txset_\mintette\cup \{\tx\}$\;
    \label{alg:commit:txset}
\KwRet $(\pk_\mintette,\sigsign(\sk_\mintette,\tx))$
    \label{alg:commit:confirmation}
}
}
\caption{$\committx$, run by a mintette}
\label{alg:committx}
\end{algorithm}

In Algorithm~\ref{alg:committx}, a mintette first checks the transaction
and whether it falls within its remit. The mintette then checks the bundle of
evidence by verifying that all\dash or, in practice, at least a majority\dash
of
mintettes associated
with each input are all included, that the input mintettes were authorized to
act as mintettes in the current period, and that their signatures verify.  If
these checks pass and the transaction has not been seen before, then the
mintette adds all the output addresses for the transaction to its $\utxo$
list and adds the transaction to $\txset$. The mintette then sends to the user
evidence that the transaction will be included in the higher-level block
(which a user may later use to implicate the mintette if this is not the
case).

At the end of the period, all mintettes send $\txset$ to the central bank,
along with additional information in order to achieve integrity, which we
discuss in the next section.

\paragraph{Security} In our first threat model, where all transactions are
processed by a set of mintettes with honest majority, it is clear that (1) no
double-spending transactions will be accepted into $\txset$ by honest
mintettes, and (2) the confirmation given to a user in
Line~\ref{alg:commit:confirmation} of Algorithm~\ref{alg:committx} can be
wielded by the user as evidence that the mintette promised to seal the
transaction.  Thus, in our first threat model\dash in which all transactions
are processed by a set of mintettes with honest majority\dash the first and
second integrity properties in Section~\ref{sec:security} are already
satisfied by our basic consensus protocol.

\paragraph{Communication overhead} Importantly, all communication between the
mintettes is done \emph{indirectly} via the user (using the bundles of
evidence), and there is no direct communication between them.
This allows for a low communication overhead for the mintettes, especially
with respect to existing systems such as Bitcoin and Ripple/Stellar (in which
the respective miners and servers must be in constant communication),
which facilitates\dash as we will see in Section~\ref{sec:impl}\dash the
scalability and overall performance benefits of \rscoin.

\subsection{Achieving auditability}\label{sec:auditability}

While our basic consensus mechanism already achieves some of our desired
integrity properties (at least in our weaker threat model), it is still not
clear that it provides any stronger notions of integrity, or that it provides
any integrity in a more hostile environment.  To address this limitation, we
present in this section a way to augment both the lower-level blocks discussed
in Section~\ref{sec:low-blocks} and the basic consensus mechanism.  At a high
level, a mintette now maintains a high-integrity log that highlights both its
own key actions, as well as the actions of those mintettes with whom it has
indirectly interacted (i.e., from whom it has received signatures, ferried
through the user, in the process of committing a transaction).

In more detail, each mintette maintains a log of absolutely ordered actions
along with their notional sequence number. Actions may have one of three
types: \aquery, \acommit and \aepoch. The \aquery action signals an update to
$\periodtxset$ as a result of an input address being assigned to a new
transaction (Line~\ref{alg:vote:update} of
Algorithm~\ref{alg:checknotdoublespent}), so for this action the log includes
the new transaction.  The \acommit action signals an update to $\utxo$ and
$\txset$ as a result of receiving a new valid transaction
(lines~\ref{alg:commit:utxo} and~\ref{alg:commit:txset} of
Algorithm~\ref{alg:committx}, respectively), so for this action the log
includes the transaction and its corresponding bundle of evidence.

To facilitate the \aepoch action, each mintette stores not only the log itself
but also a rolling hash chain; i.e., a \emph{head} that acts as a witness to
the current state of the log, so $h_\seq = H(a_\seq \| h_{\seq-1})$,
where $a_\seq$ is the log entry of the action and $h_{\seq-1}$ is the previous
head of the chain.

To share this witness, mintettes include a
signed head in every message they emit; i.e., in line~\ref{alg:vote:send}
of Algorithm~\ref{alg:checknotdoublespent} and
line~\ref{alg:commit:confirmation} of Algorithm~\ref{alg:committx}, the
mintette $\mintette$ computes
$\sig\randpick\sigsign(\sk_\mintette,(\tx_c,\addrid,h,\seq)$ (where $h$ is
the head of its chain) rather than
$\sig\randpick\sigsign(\sk_\mintette,(\tx_c,\addrid))$, and outputs
$(\pk_\mintette,\sig,h,\seq)$.
Now that mintettes are potentially aware of each others' logs, the
\aepoch action\dash which, appropriately, marks the end of an
epoch\dash includes in the log the heads of the other chains of which the
mintette is aware, along with their sequence number.  This results in the
head of each mintette's chain depending on the latest known head of both its
own and other chains; we refer to this phenomenon as \emph{cross-hashing}
(which, in effect, implements a cryptographic variant of vector
clocks~\cite{DBLP:journals/computer/RaynalS96}).

We can now argue that these augmented lower-level blocks provide sufficient
insight into the actions of the mintettes that stronger notions of integrity
can be achieved.  In particular, we have the following lemma:

\begin{lemma}\label{lem:security}
In both of our threat models, the augmented consensus protocol outlined above
provides timed personal audits, universal audits, and exposed inactivity (as
defined in Section~\ref{sec:security}).
\end{lemma}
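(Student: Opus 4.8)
The plan is to isolate a single cryptographic fact and then derive each of the three properties from it together with the structure of the augmented lower-level blocks. The fact is a \emph{commitment property} of the rolling hash chain: if a mintette $\mintette$ has ever emitted a signed head $(\pk_\mintette,\sig,h,\seq)$ for which $\sigverify(\pk_\mintette,(\cdot,\cdot,h,\seq),\sig)=1$, then, under collision-resistance of $H$ and unforgeability of the signature scheme, $h$ binds $\mintette$ to a \emph{unique} sequence of log entries $a_1,\dots,a_\seq$: any log prefix that hashes (via $h_i = H(a_i \| h_{i-1})$) to $h$ must coincide with this one except with negligible probability, and no party other than $\mintette$ could have produced the signature. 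I would prove this first as an auxiliary claim by the standard reduction argument\dash a distinct colliding prefix yields an $H$-collision, a head attributed to an uncorrupted key yields a forgery\dash and I would note that it holds in both threat models, since it does not assume that any mintette is honest.

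Next I would establish \emph{universal audits}. The augmented lower-level block of each mintette contains its full ordered log, its final head, and a signature on that head; given all such blocks, an auditor recomputes the hash chain, checks it against the signed final head (rejecting any mismatch by the commitment property), and then replays the log deterministically: for each \aquery entry it re-runs the checks of Algorithm~\ref{alg:checknotdoublespent} (well-formedness via $\checktx$, shard membership via $\owners$, and the induced update to $\periodtxset$), and for each \acommit entry it re-runs the checks of Algorithm~\ref{alg:committx} on the logged bundle of evidence (authorization of the input mintettes in period $j$, validity of their signatures) together with the induced updates to $\utxo$ and $\txset$. Because the chain is append-only and the final head is signed, no entry can be retroactively modified, deleted, or inserted without exhibiting a collision or a forgery; and any accepted double-spend is literally visible as two \aquery entries binding the same $\addrid$ to distinct transactions. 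The cross-hashes recorded in each \aepoch entry close the last gap: since each such entry commits to the heads and sequence numbers of the other chains the mintette has seen, a mintette cannot present one version of its log to a peer during the period and a different version to the central bank afterwards, as the peer's \aepoch entry pins down the earlier version.

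\emph{Timed personal audits} then follows quickly: a user who interacted with $\mintette$ retains the tuple $(\pk_\mintette,\sig,h,\seq)$ it was sent (line~\ref{alg:vote:send} of Algorithm~\ref{alg:checknotdoublespent} or line~\ref{alg:commit:confirmation} of Algorithm~\ref{alg:committx}); by the commitment property the published log of $\mintette$ must contain exactly $a_\seq$ at position $\seq$ with running head $h$, and replaying the log up to that point exhibits behavior consistent with the observed interaction, so any discrepancy is again a collision or a forgery. For \emph{exposed inactivity}, I would argue that an auditor holding all lower-level blocks sees both the (near-)empty own log of an absent mintette and the absence of that mintette's heads from the \aepoch cross-hashes of the mintettes that were active, and that a genuinely inactive mintette can repair neither: padding its own log retroactively would contradict the signed heads that honest users would hold had it actually served them (they hold none, having instead used the other owners of the shard), and appearing in a peer's cross-hashes would require that peer to have recorded a head the mintette never issued in time.

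The step I expect to be the main obstacle is the second threat model, in which every mintette may be dishonest and the mintettes collude. Universal audits and timed personal audits survive unchanged, because they rest solely on collision-resistance and signature unforgeability, which hold against colluding mintettes. Exposed inactivity is subtler: a cabal of mintettes can fabricate a mutually consistent set of logs and cross-hashes, so the guarantee cannot be stated in absolute terms. The right formulation\dash and the delicate part of the argument\dash is that exposure holds \emph{relative to} the honest users and the honest central bank: what an inactive mintette cannot do is retroactively manufacture a history matching the confirmations held by the honest users it failed to serve, nor one that the central bank will accept as timely when, at the end of the period, it compares per-shard participation across the blocks it receives. I would make this precise by fixing the adversary's corruption set, observing that the multiset of honest-user confirmations is determined before the period closes, and showing that consistency with that multiset forces any mintette that was queried by an honest user but was unresponsive to exhibit a gap, relative to its co-owners' records, in its published log.
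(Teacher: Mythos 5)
Your treatment of timed personal audits is essentially the paper's own argument, just made more explicit: a signed head $(\pk_\mintette,\sig,h,\seq)$ held by a user commits the mintette to a unique log prefix under collision-resistance and unforgeability, so any fork from the reported log, or any stale head omitting the action that the signed response itself evidences, incriminates the mintette. That part is fine and holds in both threat models for exactly the reason you give.

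For universal audits and exposed inactivity, however, you have replaced the paper's load-bearing step with a weaker one, and there is a concrete gap. Your commitment property only pins down log prefixes whose heads were actually \emph{distributed} (to users or into peers' \aepoch entries); it says nothing about entries a mintette appends to the tail of its log just before signing and submitting its lower-level block. Such appended entries are perfectly consistent with every previously emitted signed head, so ``append-only plus signed final head'' does not by itself rule out retroactive \emph{insertion} of transactions, nor does it let an auditor decide whether a claimed action was performed ``in a timely manner'' --- which is precisely the clause in the exposed-inactivity property (a mintette retroactively claiming fees for service not provided on time). The paper closes this gap with a probabilistic ordering argument you do not make: because the second-phase messages ferry each input mintette's current head to the output mintettes, the cross-hash references form a low-degree random graph with good expansion, so within a short time a head dependent on any given action propagates to a super-majority of mintettes. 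An auditor can therefore test whether a claimed action ``happened before'' the bulk of other actions by checking whether heads dependent on it appear in other mintettes' chains; an action appended late fails this test with high probability. You gesture at this (``a head the mintette never issued in time''), and your relativization to honest users' confirmations in the second threat model is a reasonable --- arguably more careful --- supplement, but without the propagation/expansion argument you have no mechanism for establishing approximate temporal order across unsynchronized logs, and both universal audits (no retroactive insertion) and exposed inactivity (no retroactive claims of activity) depend on it.
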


\begin{proof}
(Informal.)
To prove that our protocol provides timed personal audits, observe that if
the log reported by any
mintette (or equivalently its hash at any log position) forks at any point
from the record of a user or other mintette, then
the signed head of the hash chain serves as evidence that the log
is different.  To remain undetected, the mintette must therefore provide
users with the signed head of a hash chain that is a prefix of the actual
hash chain it will report.
Both the \aquery and \acommit messages leading to a signed hash, however,
modify the action log.  Providing an outdated hash thus would not contain
the latest action, so again there is evidence that such an action should have
been recorded (in the form of the signed response to the message that should
prompt the action), which also incriminates the mintette. Thus a mintette that
does not wish to be detected and incriminated may only refrain from responding
to requests requiring actions that would change its log.

To prove that our protocol provides universal audits and exposed inactivity,
we first note that,
despite the lack of synchronization between mintettes within periods, we can
detect when an action is committed to a mintette log a `significant time'
after another action.  This is due to the fact that the second message of the
2PC protocol that users send to mintettes carries the hash heads from
all input mintettes involved.  This forms a low-degree random graph with good
expansion properties, and we expect that in a short amount of time mintettes
will have hash chains dependent on the hash chains of all other mintettes.
Thus, if two actions are separated by a sufficiently long period of time, it
is extremely likely that a head dependent on the first action has propagated
to a
super-majority of other mintettes. Checking this property allows us to detect
which came first with very high probability. Using this observation, everyone
may audit claims that a mintette contributed to an action (e.g., processing
the first query of the 2PC protocol for a valid transaction) in a timely
fashion, by using the process above to detect whether the claimed action
from the mintette is or is not very likely to have come after the same action
was committed by all other mintettes concerned.
\end{proof}

Finally, \rscoin makes the key security assumption that all shards are
composed of an honest majority of mintettes.  This is
not quite the same as assuming an overall honest majority of mintettes, but it
can be related to the more traditional assumption that each
mintette behaves honest with some probability, as we demonstrate in the
following lemma:

\begin{lemma}\label{lem:threshold}
Given a fraction of $\alpha$ corrupt mintettes, the probability that $y$
shards,
composed each of $Q$ mintettes, all have an honest majority is
\begin{equation*}
\Pr[\text{secure}] = F\left( \frac{Q-1}{2}; Q; \alpha \right)^y,
\end{equation*}
where $F(k;N;p)$ is the cumulative distribution function of a binomial
distribution over a population of size $N$ with a probability of success $p$.
\end{lemma}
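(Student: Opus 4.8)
The plan is to reduce the statement to a routine binomial computation by first fixing what ``honest majority'' means for a single shard and then invoking independence across shards. A shard of $Q$ mintettes has an honest majority precisely when the number of corrupt mintettes it contains is at most $\lfloor (Q-1)/2 \rfloor$; for odd $Q$ (the natural choice, since odd shard sizes avoid ties in majority voting) this threshold is exactly $(Q-1)/2$, the quantity appearing in the statement.

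First I would model the corruption status of the $Q$ mintettes composing a given shard. Under the stated hypothesis that a fraction $\alpha$ of mintettes is corrupt, I would treat each of the $Q$ slots as being independently corrupt with probability $\alpha$ (equivalently, regard the mintettes making up the shard as drawn i.i.d., or work in the large-population limit where sampling without replacement is well approximated by sampling with replacement). Then the number $X$ of corrupt mintettes in the shard is distributed as a binomial over a population of size $Q$ with success probability $\alpha$, and the probability that this shard has an honest majority is
\[
\Pr\Big[X \le \tfrac{Q-1}{2}\Big] \;=\; \sum_{k=0}^{(Q-1)/2} \binom{Q}{k}\,\alpha^k(1-\alpha)^{Q-k} \;=\; F\Big(\tfrac{Q-1}{2};\,Q;\,\alpha\Big),
\]
directly by the definition of the binomial CDF $F$ given in the statement.

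Second, I would observe that the $y$ shards are composed of disjoint sets of mintettes whose corruption statuses are mutually independent, so the event that \emph{all} $y$ shards have an honest majority is the intersection of $y$ independent copies of the single-shard event above. Multiplying the probabilities yields $\Pr[\text{secure}] = F\big((Q-1)/2;\,Q;\,\alpha\big)^y$, which is exactly the claim.

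The only real subtlety, and the step I would take care to spell out, is the independence-and-binomial modeling in the first step: the mintette population is finite and in principle a mintette could be assigned to more than one shard, so the clean ``$F(\cdot)^y$'' formula is exact only under the idealization that the shards are disjoint and each mintette is independently corrupt with probability $\alpha$ (or, asymptotically, as the total number of mintettes grows large). I would state this idealization explicitly as the operative assumption of the lemma; everything after it is the one-line computation above.
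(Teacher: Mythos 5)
Your proof is correct and follows the same route as the paper's own (very terse) argument: compute the single-shard honest-majority probability as the binomial CDF $F\bigl(\frac{Q-1}{2};Q;\alpha\bigr)$ and raise it to the $y$th power by independence across shards. You are in fact more careful than the paper, which silently assumes the i.i.d.-corruption and shard-independence idealizations that you make explicit.
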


\begin{proof}
The probability that a single shard composed from random mintettes has an
honest
majority is directly the cumulative distribution $\rho = F\left(
\frac{Q-1}{2}; Q; \alpha \right)$.
Since security requires an honest majority across \emph{all} shards we get
$\Pr[\text{secure}] =  \rho^y$.
\end{proof}

This lemma demonstrates that the higher the number of shards, the lower the
probability that all of them will be secure (i.e., covered by an honest
majority of mintettes).  Thus, we recommend fixing the number of shards, on
the
basis of load balancing requirements, to the smallest practical number.  A
mapping can then be defined between the address space and the shards by simply
partitioning equally the space of address identifiers amongst them. For a
given
total number of mintettes $M$, the minimal number of shards of size $Q$ that
should be used is $\lfloor{M / Q}\rfloor$.

\subsection{Performance}\label{sec:performance}

\subsubsection{Theoretical analysis}\label{sec:envelope-numbers}

Looking back at the algorithms in Section~\ref{sec:basic-consensus}, we
can get at least a theoretical estimate
of the communication and computational complexity of the system.  Denote by
$T$ the set of transactions that are generated per
second; by $Q$ the number of mintettes that own each address; and by $M$
the number of total mintettes.

For a transaction with $m$ inputs and $n$ outputs, a user sends and receives
at most $mQ$ messages in the first phase of the 2PC protocol
(line~\ref{alg:validate:vote} of Algorithm~\ref{alg:validate-tx}) and sends
and receives at most $Q$ messages in the second phase
(line~\ref{alg:validate:commit}).  For the user, each transaction thus
requires at most $2(m+1)Q$ messages.

In terms of the communication complexity per mintette, we assume that each
mintette receives a proportional share of the total transactions, which is
ensured as the volume of transactions grow, by the bank allocating shards of
equal sizes to all mintettes. Then the work per mintette is
\[
\frac{\sum_{\tx\in T} 2(m_{\tx}+1)Q}{M}.
\]
In particular, this scales \emph{infinitely}: as more mintettes are added to
the system, the work per mintette decreases (in a linear fashion) and
eventually goes to zero.

\subsubsection{Experimental analysis}\label{sec:impl}

To verify these performance estimates and to measure the latency a typical
user would experience to confirm a transaction, we implemented the basic
consensus mechanism presented in Section~\ref{sec:basic-consensus} and
measured its performance on a modest cluster hosted on Amazon's Elastic
Compute (EC2) infrastructure.
Our implementation\footnote{Available at
\url{https://github.com/gdanezis/rscoin}} consists of 2458 lines of Python
code: 1109 lines define
the core transaction structure, cryptographic processing,
and 2PC protocols as a Twisted service and client; 780 lines are devoted to
unit and timing tests; and 569 lines use the Fabric framework to do
configuration, deployment management (DevOps), live testing, and
visualizations.
Both the clients and the mintettes are implemented as single-threaded services
following a reactor pattern.  All cryptographic operations use the OpenSSL
wrapper library \verb#petlib#, and we instantiate the hash function and
digital
signature using SHA-256 and ECDSA (over the NIST-P224 curve, as optimized by
K{\"a}sper~\cite{DBLP:conf/fc/Kasper11}) respectively.  The implementation and
all configuration
and orchestration files necessary for replicating our results are available
under a BSD license.

Our experimental setup consisted of 30 mintettes, each running on an Amazon
EC2
{\tt t2.micro} instance in the EU (Ireland) data center (for reference, each
cost \$0.014 per hour as of August 2015).  We assigned three mintettes to each
shard of the transaction space, so a quorum of at least two was required for
the 2PC.  A different set of 25
servers on the same data center was used for stress testing and estimating
the peak throughput in terms of transactions per second. Each of those test
machines issued 1000 transactions consisting of two inputs and two outputs.
For wide area networking latency experiments we used a residential broadband
cable service and an Ubuntu 14.02.2 LTS Linux VM running on a 64-bit Windows
7 laptop with a \SI{2.4}{GHz} i7-4700MQ processor and \si{16}{GB} RAM.

\begin{table}
\small
\centering
\begin{tabular}{lS[table-format=7.2,group-separator={,}]S[table-format=4.2,group-separator={,}]}
    \toprule
    {\bf Benchmark} & {$\mu$ ($\si{\second}^{-1}$)} & {$\sigma$} \\ \midrule
    {Hash} & 1017384.86 & 41054.93 \\
    {Sign} & 17043.63 & 2316.40 \\
    {Verify} & 4651.20 & 89.84 \\ \midrule
    {Check tx} & 3585.02 & 95.17 \\
    {Query msg} & 1358.31 & 120.20 \\
    {Commit msg} & 1006.49 & 31.66 \\
    \bottomrule
\end{tabular}
\caption{Micro-benchmarks at the mintettes}
\label{tab:micro}
\end{table}

Table~\ref{tab:micro} reports the mean rate and the standard deviation of
key operations we rely on for \rscoin.\footnote{All measurements were
    performed on a single thread on a single core, using a reactor pattern
    where networking was necessary.}
\emph{Hash}, \emph{Sign} and \emph{Verify} benchmark the number of basic
cryptographic operations each mintette can perform per second (including the
overhead of our library and Python runtime).

For the other benchmarks, we consider a single transaction with one input and
two outputs (we observe that as of September 2014, 53\% of Bitcoin
transactions
had this structure, so this is a reasonable proxy for real usage).
The \emph{check tx} benchmark then measures the rate at which
a mintette can parse and perform the cryptographic checks associated with this
transaction. This involves a single signature check, and thus its difference
from the \emph{Sign} benchmark largely represents the overhead of parsing and
of binary conversion in Python. Guided by this benchmark, we chose to
represent
ECDSA public keys using uncompressed coordinates due to orders-of-magnitude
slowdowns when parsing keys in compressed form.

The \emph{query msg} and \emph{commit msg} benchmarks measure the
rate at which each mintette can process the first and second message of the
2PC respectively for this transaction.
These include full de-serialization, checks from persistent storage of the
$\utxo$, cryptographic checks, updates to the $\utxo$, signing, and
serialization of responses. These benchmarks guided our design towards not
synchronizing to persistent storage the $\utxo$ before each response, and
relying instead on the quorum of mintettes to ensure correctness (a design
philosophy similar to
RAMCloud~\cite{DBLP:journals/cacm/OusterhoutAEKLMMNOPRRSS11}). Persisting to
storage before
responding to each request slowed these rates by orders of magnitude.

\begin{figure}[t!]
\centering
\begin{subfigure}[b]{0.9\linewidth}
\centering
\ifsinglecolumn{
\includegraphics[width=0.45\textwidth]{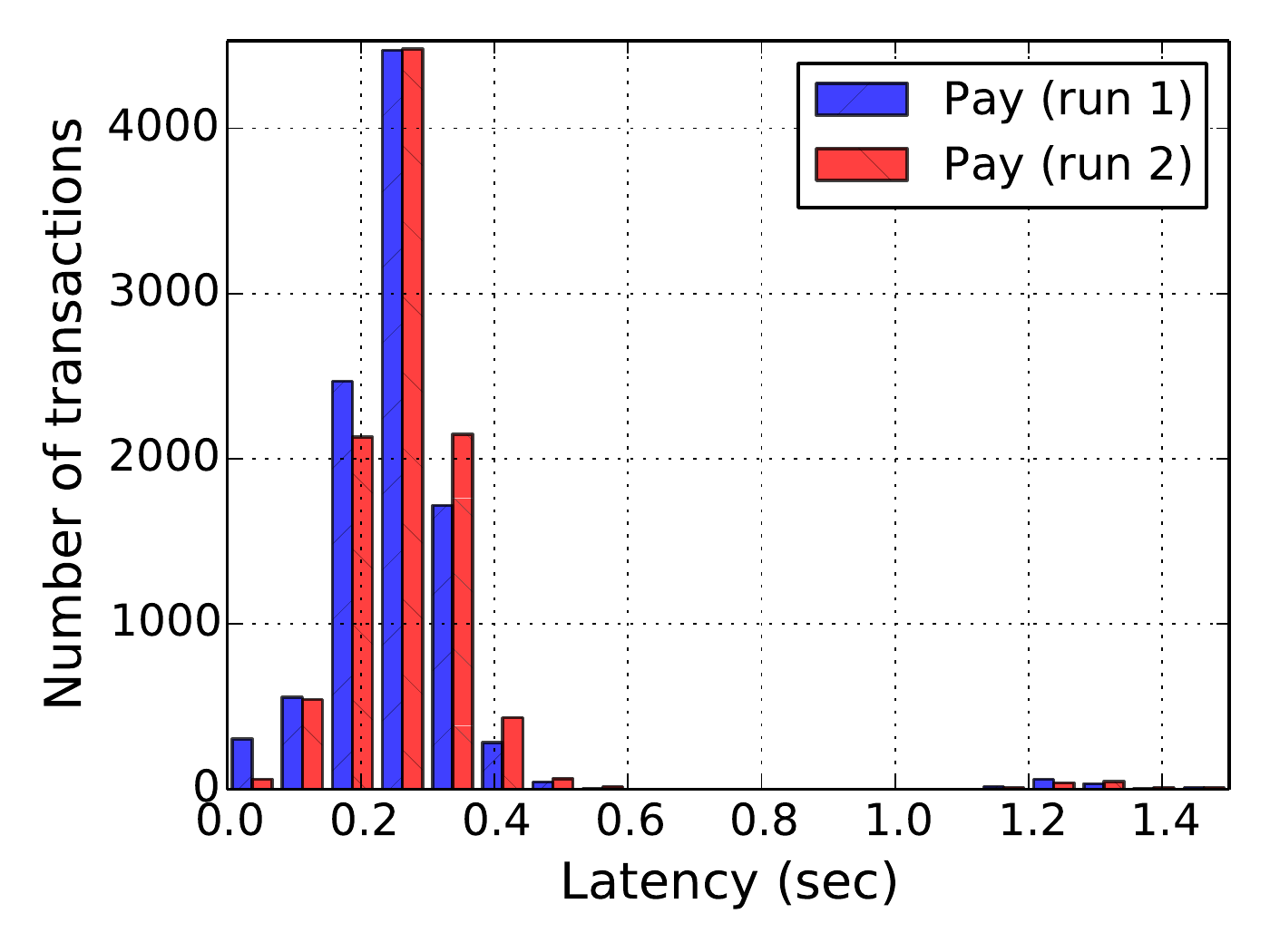}
}\else{
\includegraphics[width=0.8\linewidth]{WAN-latency.pdf}
}\fi
\caption{Local area network (EC2)}
\label{fig:lanlatency}
\end{subfigure}
~
\begin{subfigure}[b]{0.9\linewidth}
\centering
\ifsinglecolumn{
\includegraphics[width=0.45\textwidth]{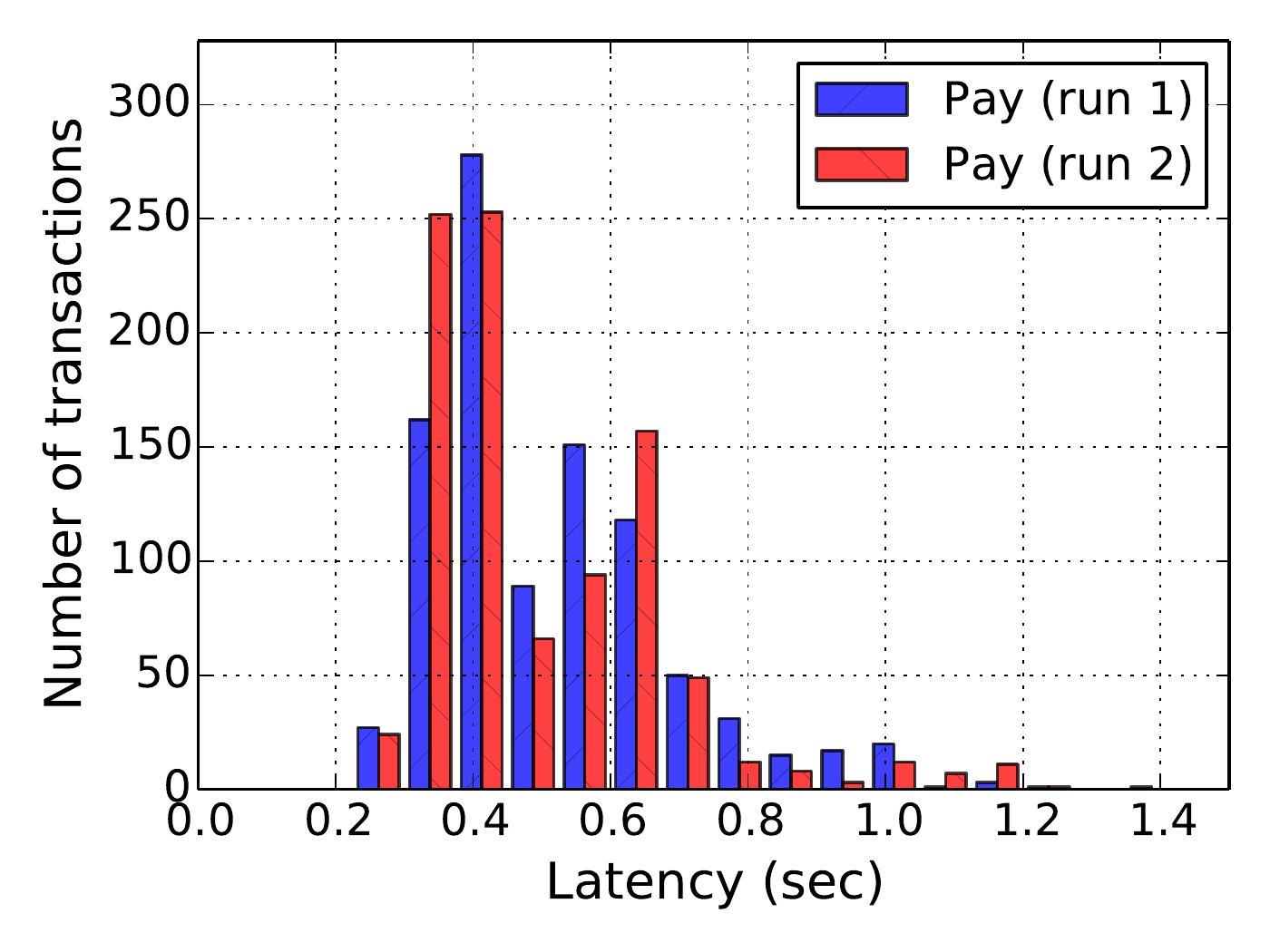}
}\else{
\includegraphics[width=0.8\linewidth]{LAN-latency.pdf}
}\fi
\caption{Wide area network (Broadband)}
\label{fig:wanlatency}
\end{subfigure}
\caption{Latency, in seconds, to perform the 2PC to validate
a payment for a transaction with freshly issued coins as inputs (run 1), and
transactions with two arbitrary previous transactions as inputs (run 2).}
\label{fig:latency}
\end{figure}

Figure~\ref{fig:latency} illustrates the latency a client would experience
when interacting with the mintettes. Figure~\ref{fig:lanlatency}
illustrates the experiments with client machines within the data center, and
point to an intrinsic delay due to networking overheads and cryptographic
checks of less than 0.5 seconds. This includes both phases of the 2PC.

Over a wide area network the latency increases (Figure~\ref{fig:wanlatency}),
but under the conditions tested, the latency
is still usually well under a second for the full 2PC and all checks.  We
note that no shortcuts were implemented: for each transaction, all three
mintettes for each input were contacted and expected to respond in the first
phase, and all three mintettes responsible for the new transaction were
contacted and have to respond in the second phase. In reality, only a majority
need to respond before concluding each phase, and this may reduce latency
further.

\begin{figure}[t!]
\centering
\ifsinglecolumn{
\includegraphics[width=0.5\textwidth]{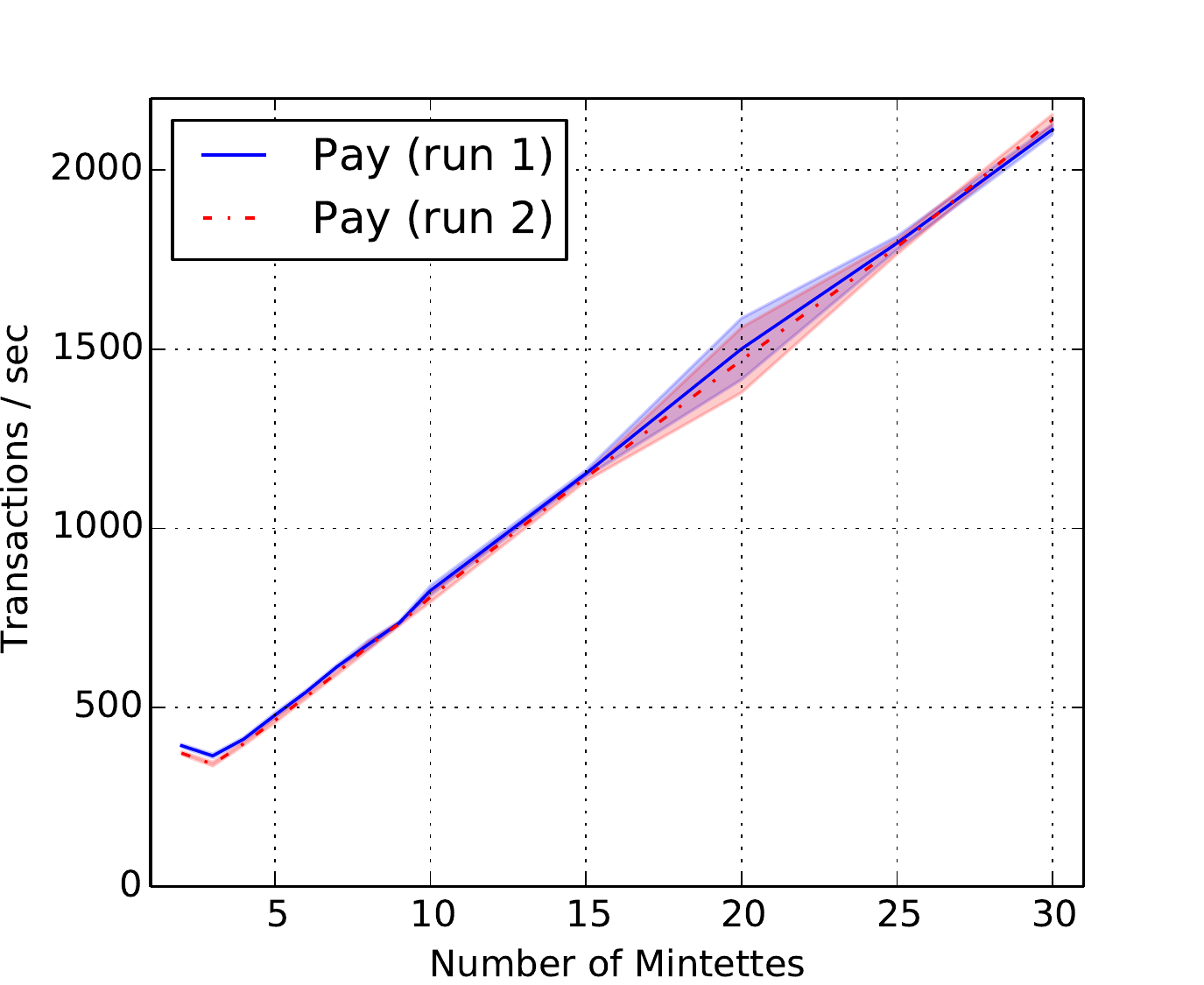}
}\else{
\includegraphics[width=0.9\linewidth]{Throughput-60.pdf}
}\fi
\caption{Throughput (90$^{\text{th}}$ percentile and standard error), in
transactions per second, as a function of the number of mintettes, for
transactions with two freshly issued coins as inputs (run 1) and transactions
with two arbitrary previous transactions as inputs (run 2).}
\label{fig:throughput}
\end{figure}

Figure~\ref{fig:throughput} plots the throughput of the system as we increase
the number of mintettes from 2 to 30, under the load of 25 synthetic clients,
each pushing 1000 transactions.  As expected, when fewer than three mintettes
are available the throughput is roughly flat (fewer than 400 transactions per
second), as both phases of the 2PC need to contact all mintettes.  Once more
than the minimum of three mintettes are available the load is distributed
across them: the first phase need to access at most six mintettes (three
for each of the two transaction inputs), and the second phase at most three
mintettes. This load per transaction is independent of the number of mintettes
and as a result the throughput scales linearly, as predicted in
Section~\ref{sec:envelope-numbers}.  After the initial three mintettes, each
new mintette adds approximately 66 additional transactions per second to the
capacity of the system.

The gap between the micro-benchmarks relating to the message processing for
the two phases (1358.31~$\si{\second}^{-1}$ and 1006.49 $\si{\second}^{-1}$
respectively) and the rate of transactions observed under end-to-end
conditions (approximately 400 $\si{\second}^{-1}$) indicates that at this
point bandwidth, networking, or the interconnection with the process are
scaling bottlenecks for single mintettes. In particular no pipelining was
implemented as part of the client (although the mintettes support it) and thus
every request initiates a fresh TCP connection, with the slowdowns and
resource
consumption on the hosts that this entails.

\section{The \secrscoin System}\label{sec:system}

With our consensus protocol in place, we now describe the structure of
\rscoin,
focusing on the interaction between the mintettes and the central bank, and on
the overall parameters and properties of the system.  We first describe the
structure and usage of \rscoin
(Sections~\ref{sec:low-blocks} and~\ref{sec:high-blocks}) and then
address considerations that arise in how to allocate fees to mintettes
(Section~\ref{sec:role-bank}); overlay \rscoin on top of an existing
cryptocurrency like Bitcoin (Section~\ref{sec:embedding}); incentivize
mintettes to follow the consensus protocol and present a collectively
consistent ledger to the central bank
(Section~\ref{sec:incentives}); and set concrete choices for various
system parameters (Section~\ref{sec:parameters}).

\subsection{Lower-level blocks}\label{sec:low-blocks}

A lower-level block produced by a mintette $\mintette$ within $\period_i$
looks like $\block = (h,\txset,\sig,\mintetteset)$,
where $h$ is a hash, $\txset$ is a collection of transactions, and
$\sig$ is a signature from the mintette that produced this block.  The fourth
component $\mintetteset$ specifies the cross-chain property of lower-level
blocks (recall from Section~\ref{sec:auditability} that mintettes may
reference each others' blocks) by identifying the hashes of the other previous
blocks that are being referenced.

Denote by $\pk_\bank$ the bank's public key and by $\periodmintette{i}$ the
set of mintettes authorized by the bank in the previous higher-level block
$\centralblockspec{\bank}{i-1}$
(as described in Section~\ref{sec:high-blocks}), and define
$\otherblocks \gets h_1\|\ldots\| h_n$ for $\mintetteset = (h_1,\ldots,h_n)$.
Assuming the block $\block$ is produced in $\epoch_j$, to check that $\block$
is valid one then checks that
\ben
\item $h = H(h_{\bank}^{(i-1)}\|h_{j-1}^{(\mintette)}
    \|\otherblocks\|\txset)$,
\item $\sigverify(\pk_\mintette,h,\sig)=1$,
\item $(\pk_\mintette,\mintettesig)\in\periodmintette{i}$ for some
$\mintettesig$, and
\item $\sigverify(\pk_\bank,(\pk_\mintette,\period_i),\mintettesig)=1$.
\een

To form a lower-level block, a mintette uses the transaction set $\txset$ it
has formed throughout the epoch (as described in
Section~\ref{sec:basic-consensus}) and the hashes $(h_1,\ldots,h_n)$ that it
has
received from other mintettes (as ferried through the ``bundle of evidence''
described in Section~\ref{sec:auditability})
and creates $\mintetteset\gets(h_1,\ldots,h_n)$,
$\otherblocks\gets h_1\|\ldots\| h_n$,
$h\gets H(h_{\bank}^{(i-1)}
\| h_{j-1}^{(\mintette)}\|\otherblocks\|\txset)$,
and $\sig\randpick\sigsign(\sk_\mintette,h)$.

\subsection{Higher-level blocks}\label{sec:high-blocks}

The higher-level block that marks the end of $\period_i$ looks like
$\centralblockspec{\bank}{i} = (h,\txset,\allowbreak\sig,
\periodmintette{i+1})$, where these first three
values are similar to their counterparts in lower-level blocks (i.e., a hash,
a collection of transactions, and a signature), and the set
$\periodmintette{i+1}$
contains pairs $(\pk_\mintette,\mintettesig)$; i.e., the public
keys of the mintettes authorized for $\period_{i+1}$ and the bank's
signatures on the keys.

To check that a block is valid, one checks that
\ben
\item $h = H(h_{\bank}^{(i-1)}\|\txset)$,
\item $\sigverify(\pk_\bank,h,\sig)=1$, and
\item $\sigverify(\pk_\bank,(\pk_\mintette,\period_{i+1}),\mintettesig)=1$ for
all $(\pk_\mintette,\mintettesig)\in\periodmintette{i+1}$.
\een

To form a higher-level block, the bank must collate the inputs it is given by
the mintettes, which consist of the lower-level blocks described above and the
action logs described in Section~\ref{sec:auditability}.  To create a
consistent transaction set $\txset$, a vigilant bank might need to look
through all of the transaction sets it receives to detect double-spending,
remove any conflicting transactions, and identify the mintette(s) responsible
for including them.  As this would require the bank to perform work
proportional
to the number of transactions (and thus somewhat obviate the reason for
mintettes), we
also consider an optimistic approach in which the bank relies on the consensus
protocol in Section~\ref{sec:consensus} and instead simply merges the
individual transaction sets to form $\txset$.  The bank then forms $h\gets
H(h_{\bank}^{(i-1)}\|\txset)$, $\sig\randpick\sigsign(\pk_\bank,h)$, and
creates the set of
authorized mintettes using a decision process we briefly discuss below and in
Section~\ref{sec:incentives}.

\subsubsection{Coin generation and fee allocation}\label{sec:role-bank}

In addition to this basic structure, each higher-level block could also
contain within $\txset$ a special coin generation transaction and an
allocation
of fees to the mintettes that earned them in the previous period.
Semantically, the coin generation could take on the same structure as in
Bitcoin; i.e., it could be a transaction $\txspec{\emptyset}{n}{\addr_\bank}$,
where $\addr_\bank$ is an address owned by the bank, and fees could be
allocated using a transaction $\txspec{\addr_\bank}{f}{\addr_\mintette}$,
where
$f$ represents the fees owed to $\mintette$.
The interesting question is thus not how central banks can allocate fees to
mintettes, but how it decides which mintettes have earned these fees.  In
fact, the provided action logs allow the central bank to identify active and
live mintettes and allocate fees to them appropriately.

This mechanism (roughly) works as follows.  The central bank keeps a tally
of the mintettes that were involved in certifying the validity of input
addresses; i.e., those that replied in the first phase of the consensus
protocol.  The choice to
reward input mintettes is deliberate: in addition to providing a direct
incentive for mintettes to respond in the first phase of the protocol, it
also provides an indirect incentive for mintettes to respond in the second
phase, as only a transaction output that is marked as unspent can later be
used as an input (for which the mintette can then earn fees).  Thus,
rewarding input mintettes provides incentive to handle a transaction
throughout its lifetime.

The action logs also play a crucial role in fee allocation.  In particular,
the
``exposed inactivity'' security property from Section~\ref{sec:auditability}
prevents an inactive mintette from
becoming active at a later time and claiming that it contributed to previous
transactions, as an examination of the action logs can falsify such claims.
Additionally, if fee allocation is determined on the basis of a known function
of the action logs, anyone with access to the action logs can audit the
actions of the central bank.

Finally, we mention that although the logs are sent only to the central bank,
the expectation is that the central bank will publish these logs to allow
anyone to audit the system, as well as the bank's operation.  As we assume the
central bank is honest, this
does not present a problem, but in a stronger threat model in which less trust
were placed in the central bank, one might instead attempt to adopt a
broadcast system for distributing logs (with the caveat that this approach
introduces significantly higher latency).  In such a setting, anyone with
access to the logs could verify not only the actions of the mintettes, but
could also replay these actions to compare the ledger agreed upon by the
mintettes and the ledger published by the bank; this would allow an auditor to
ensure that the bank was not engaging in misbehavior by, e.g.,
dropping transactions.

\subsubsection{A simplified block structure}\label{sec:embedding}

The above description of higher-level blocks (and the previous description of
lower-level blocks) contains a number of additional values that do not exist
in the blocks of existing cryptocurrencies, making \rscoin somewhat
incompatible with their semantics.  To demonstrate that \rscoin can more
strongly resemble these cryptocurrencies, we briefly describe a way of
embedding these additional values into the set of transactions.

Rather than include the set $\periodmintette{i+1}$, the bank could instead
store some units of currency in a master address $\addr_\bank$ and include in
$\txset_i$ a transaction $\txspec{\addr_\bank}{n_{\pk}}{\periodpk{i+1}}$,
where $\periodpk{i+1}$ is an address specific to $\period_{i+1}$.
The bank could then include in $\txset_i$ a transaction
$\txspec{\periodpk{i+1}}{n_\mintette}{\pk_\mintette}$
for each mintette $\mintette$ authorized for $\period_{i+1}$.
Now, to check the validity of a particular lower-level block, one could check
that such a transaction was included in the previous higher-level block.

\subsection{Incentivizing mintettes}\label{sec:incentives}

One might naturally imagine that this structure, as currently described,
places the significant burden on the central bank of having to merge the
distinct blocks from each mintette into a consistent history.  By providing
appropriate incentives, however, we can create an environment in which
the presented ledger is in fact consistent before the bank
even sees it.  If mintettes deviate from the expected behavior then, as we
described in Section~\ref{sec:role-bank}, they can be held accountable and
punished accordingly (e.g., not chosen for future periods or not given any
fees they have earned).

Section~\ref{sec:role-bank} describes one direct incentive for mintettes to
collect transactions, which is fees.  
As we described in Section~\ref{sec:role-bank},
mintettes are rewarded only for \emph{active} participation, so that an
authorized mintette needs to engage with the system in order to earn
fees.
Section~\ref{sec:embedding} describes another direct incentive, which is the
authorization of mintettes by the central bank.  For semantic purposes, the
value $n_\mintette$ used to authorize each mintette for the next period
could be arbitrarily small.  As an incentive, however, this value
could be larger to directly compensate the mintettes for their services.
%

Finally, we expect that the central bank could be a national or international
entity that has existing relationships with, e.g., commercial banks.  There
thus
already exist strong business incentives and regulatory frameworks for such
entities to act as honest mintettes.

\subsection{Setting system parameters}\label{sec:parameters}

As described, the system is parameterized by a number of variables, such as
the length of epochs, the length of a period, and the number of
mintettes.
The length of an epoch for an individual mintette is entirely dependent on
the rate at which it processes transactions (as described in detail in
Section~\ref{sec:auditability}).  Mintettes that process more transactions
will therefore have shorter epochs than ones that do so less frequently.
There is no limit on how short an epoch can be, and the only
upper limit is that an epoch cannot last longer than a period.

It might seem desirable for periods to be as
short as possible, as ultimately a transaction is sealed into the official
ledger only at the end of a period.  To ease the burden on the bank, however,
it is also desirable to have longer periods, so that central banks have to
intervene as infrequently as possible (and, as we describe in
Section~\ref{sec:bloat}, so that central banks can potentially perform
certain optimizations to reduce transaction bloat).  In
Section~\ref{sec:basic-consensus}, we described
methods by which mintettes could ``promise'' (in an accountable way) to users
that their transactions would be included, so that in practice
near-instantaneous settlement can be achieved even with longer periods, so
long as one trusts the mintette quorum. 
Bitcoin,
Nevertheless, we do not expect periods to last longer than a day.

For the purposes of having a fair and competitive settlement process, it
is desirable to have as many mintettes as possible; as we saw in
Section~\ref{sec:envelope-numbers}, this is also desirable from a performance
perspective, as the performance of the \rscoin system
(measured in the rate of transactions processed) scales linearly with the
number of mintettes.  Adding more mintettes, however, also has the effect
that they earn less in transaction fees, so these opposing concerns must be
taken into account when settling on a concrete number (to give a very rough
idea, one number that has been suggested~\cite{boe-private} is 200).

\section{Optimizations and Extensions}\label{sec:extensions}

In Sections~\ref{sec:consensus} and~\ref{sec:system}, we presented a
(relatively) minimal version of \rscoin, which allows us to achieve the basic
integrity and scalability properties that are crucial for any currency
designed to be used on a global level.  Here, we briefly sketch some
extensions that could be adopted to strengthen either of these properties,
and leave a more detailed analysis of these or other solutions as interesting
future research.

\subsection{Pruning intermediate transactions}\label{sec:bloat}

At the end of a period, the central bank publishes a higher-level block
containing the collection of transactions that have taken place in that time
interval; it is only at this point that transactions are officially recorded
in the ledger.  Because mintettes provide evidence on a shorter time scale
that a user's transaction is valid and will be included in the ledger,
however,
users might feel more comfortable moving currency multiple times
within a period than in traditional cryptocurrencies (in which one must wait
for one or several blocks to avoid possible double-spending).

It therefore might be the case that at the end of a period, the central bank
sees not just individual transactions, but
potentially multiple ``hops'' or even whole ``chains'' of transactions.  To
limit \emph{transaction bloat}, the bank could thus prune these intermediate
transactions at the end of the period, so that ultimately only the start and
end points of the transaction appear in the ledger, in a new transaction
signed by the central bank.

On its surface, this idea may seem to require a significant amount of trust in
the central bank, as it could now actively modify the transaction history.
The action logs, however, would reveal the changes that the bank had made and
allow users to audit its behavior, but nevertheless the alterations that could
be made would need be significantly restricted.

\subsection{Further incentives for honest behavior}

In addition to the existing incentives for honest behavior outlined in
Sections~\ref{sec:role-bank} and~\ref{sec:incentives}, mintettes could adopt
a sort of proof-of-stake mechanism, in which they escrow some units
of currency with the central bank and are allowed to collate only a
set of transactions whose collective value does not exceed the escrowed value.
If any issue then arises with the transactions produced by the
mintette (e.g., it has accepted double-spending transactions), the central
bank can seize the escrowed value and remove the double-spending transactions,
so the mintette ultimately pays for this misbehavior out of its own pocket
(and maybe even pays additional fines).

This mechanism as described is not fully robust (as in particular the mintette
might accept many expenditures of the same unit of currency, not just
two), but it does have an interesting effect on the length of periods.  In
particular, the length of earlier periods will necessarily
be quite small, as mintettes will not have much capital to post.  As
mintettes accumulate stores of currency, however, periods can grow longer.
This is a fairly natural process, as it also allows for a trial period in the
beginning to ensure that authorized mintettes don't misbehave, and
then for a more stable system as a set of trustworthy mintettes emerges.

\subsection{Multiple banks and foreign exchange}\label{sec:multiple-banks}

In a global setting, one might imagine that each central
bank could develop their own version of \rscoin; this would lead, however, to
a landscape much the same as today's Bitcoin and the many altcoins
it has inspired, in which multiple implementations of a largely overlapping
structure lead to an \emph{infrastructure fragmentation}: bugs are
replicated across codebases and compatibility across different altcoins is
artificially low.

An attractive approach is for different central banks to instead use the same
platform, to prevent this fragmentation and to allow users to seamlessly store
value in many different currencies.  While this allows the currencies
generated
by different central banks to achieve some notion of interoperability, we
still
expect that different blockchains will be kept separate; i.e., a particular
central bank does not\dash and should not\dash have to keep track of
all transactions that are denominated in the currency of another central bank.
(Mintettes, however, may choose to validate transactions for any number of
central banks, depending on their business interests.)

While every central bank does not necessarily
need to be aware of transactions denominated in the currency of another
central bank, this awareness may at times be desirable.  For example, if a
user would like to exchange some units of one currency into another belonging
to a central bank that is relatively known to and trusted by the first (e.g.,
exchange GBP for USD), then this should be a relatively easy process.  The
traditional approach is to simply go to a third-party service that holds units
of both currencies, and then perform one transaction to send units of the
first currency to the service, which will show up in the ledger of the
first currency, and another transaction to receive units of the second
currency, which will show up in the ledger of the second currency.  

Although this is the approach by far most commonly adopted in practice (both
in fiat currency and cryptocurrency markets), it has
a number of limitations, first and foremost of which is that it is completely
opaque: even an outside observer who is able to observe both ledgers sees two
transactions that are not linked in any obvious way.  One might naturally
wonder, then, if a more \emph{transparent} mechansim is possible, in which the
currency exchange shows up as such in the ledger.
We answer this question in the affirmative in the Appendix,
in which we demonstrate a form of \emph{fair exchange}.

Briefly, to achieve this fair exchange, we adapt a protocol to
achieve \emph{atomic cross-chain trading},\footnote{The clearest explanation
    of this for Bitcoin, by Andrew Miller, can be found
    at \url{bitcointalk.org/index.php?topic=193281.msg3315031\#msg3315031}.}
which provides a Bitcoin-compatible
way for two users to \emph{fairly} exchange units of one currency for some
appropriate units of another currency; i.e., to exchange currency in a way
that guarantees that either the exchange is successful or both users end up
with nothing (so in particular it cannot be the case that one user reclaims
currency and the other does not).  If one is less concerned about
compatibility with Bitcoin, then a slightly simpler approach such as ``pay on
reveal secret''~\cite{cross-chain-exchange} could be adopted.

To fit our setting, in which central banks may want to maintain some control
over which other currencies their currency is traded into and out of (and in
what volume), we modify the existing protocol to require a third party to sign
both transactions only if they are denominated in currencies that are viewed
as ``exchangeable'' by that party.  This serves to not only signal the third
party's blessing of the exchange, but also to bind the two transactions
together across their respective blockchains.  Our proposal of this protocol
thus enables transparent exchanges that can be approved by a third party, but
does not (and cannot) prevent exchanges from taking place without this
approval.
Importantly, however, an auditor can now\dash with access to both
blockchains\dash observe the exchange.

\section{Conclusions}

In this paper, we have presented the first cryptocurrency framework, \rscoin,
that provides the control over monetary policy that entities such as central
banks expect to retain.  By constructing a blockchain-based approach that
makes
relatively minimal alterations to the design of successful cryptocurrencies
such
as Bitcoin, we have demonstrated that this centralization can be achieved
while
still maintaining the transparency guarantees that have made
(fully) decentralized cryptocurrencies so attractive.  We have also
proposed a new consensus mechanism based on 2PC and
measured its performance, illustrating that centralization of some authority
allows
for a more scalable system to prevent double spending that completely avoids
the wasteful hashing required in proof-of-work-based systems.
%

\ifproceedings{
\section*{Acknowledgements}

We thank Robleh Ali, Simon Scorer, Alex Mitchell, and John Barrdear from the
Bank of England and Ben Laurie from Google for interesting discussions.  We
also thank our anonymous reviewers and our shepherd, Joseph Bonneau, for their
helpful feedback.  George Danezis is supported in part by EPSRC Grant EP/M013286/1 and H2020 Grant PANORAMIX (ref.\ 653497)
and Sarah Meiklejohn is supported in part by
EPSRC Grant EP/M029026/1.
}\fi

{
\balance
\bibliographystyle{IEEEtranS}
\begin{flushleft}

\end{flushleft}
}

\appendix

\section{Fair Currency Exchange}\label{sec:fair-fx}

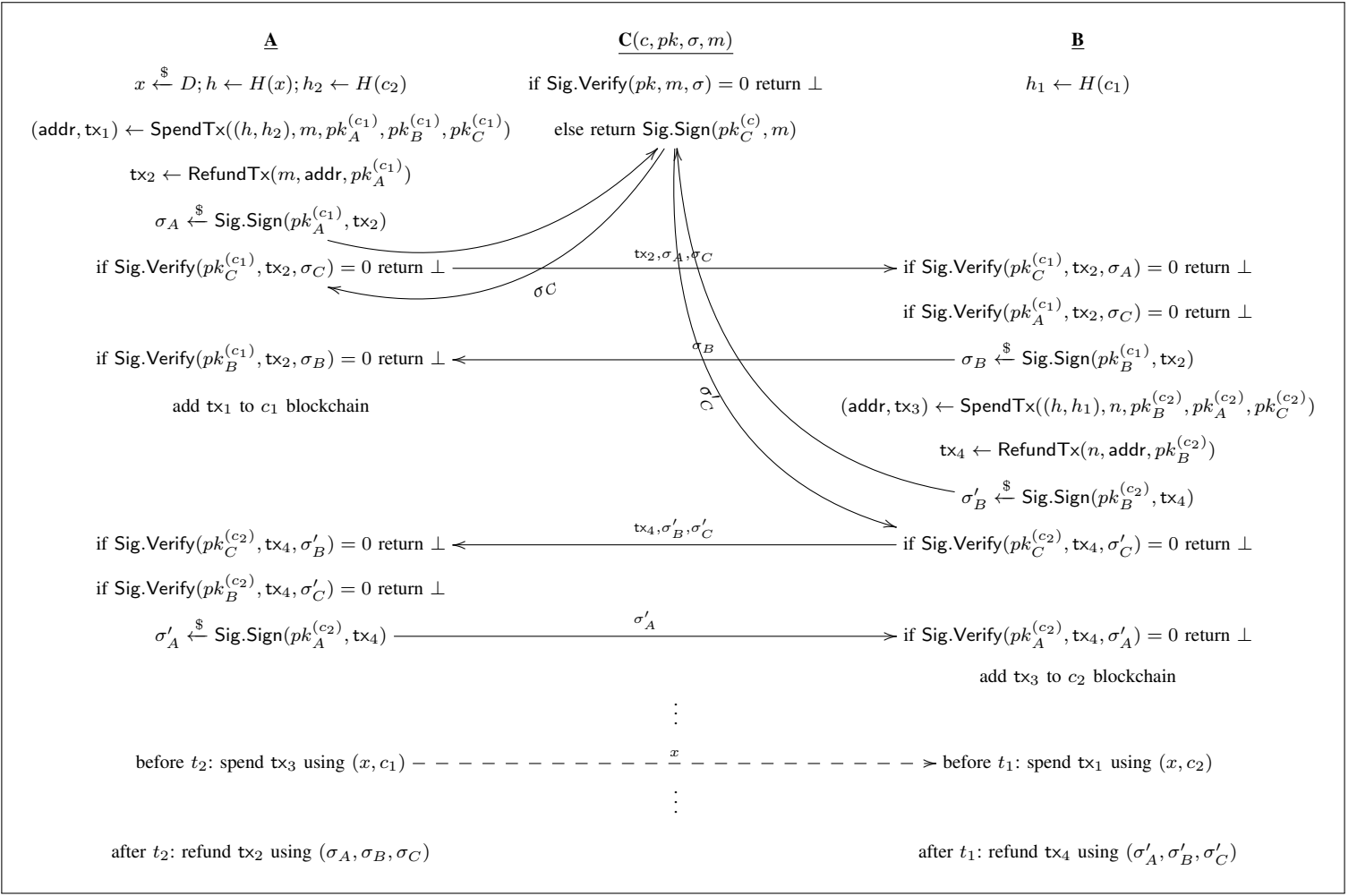
\begin{figure*}[t]
\begin{framed}
\centering
{\footnotesize
\xymatrix@R0.1pc@C0.1pc{
\underline{\tbf{A}} & \underline{\tbf{C}(c,\pk,\sig,m)} &
\underline{\tbf{B}}\\
x\randpick D; h\gets H(x); h_2\gets H(c_2) &
    \trm{if } \sigverify(\pk,m,\sig)=0 \trm{ return } \bot &
    h_1\gets H(c_1)\\
(\addr,\tx_1)\gets\spendtx((h,h_2),m,\currencypk{A}{c_1},\currencypk{B}{c_1},
    \currencypk{C}{c_1}) & \trm{else return } \sigsign(\currencysk{C}{c},m)
    \ar@/^3.2pc/[lddd]^----*---[flip][@]{\sig_C}
    \ar@/_5.7pc/[rddddddddd]^----*---[@]{\sig_C'}&\\
\tx_2\gets\refundtx(m,\addr,\currencypk{A}{c_1}) &&\\
\sig_A\randpick\sigsign(\currencysk{A}{c_1},\tx_2)
    \ar@/_2.5pc/[ruu]&&\\
\trm{if }\sigverify(\currencypk{C}{c_1},\tx_2,\sig_C) = 0 \trm{ return }\bot
    \ar[rr]^-{\tx_2,\sig_A,\sig_C} &&
\trm{if }\sigverify(\currencypk{C}{c_1},\tx_2,\sig_A) = 0\trm{ return } \bot\\
&&\trm{if }\sigverify(\currencypk{A}{c_1},\tx_2,\sig_C) = 0\trm{ return }
\bot\\
\trm{if $\sigverify(\currencypk{B}{c_1},\tx_2,\sig_B) = 0$ return $\bot$} &&
    \ar[ll]_-{\sig_B}\sig_B\randpick\sigsign(\currencysk{B}{c_1},\tx_2)\\
\trm{add $\tx_1$ to $c_1$ blockchain} &&
(\addr,\tx_3)\gets\spendtx((h,h_1),n,\currencypk{B}{c_2},\currencypk{A}{c_2},
    \currencypk{C}{c_2})\\
&& \tx_4\gets\refundtx(n,\addr,\currencypk{B}{c_2})\\
&&\ar@/^4.8pc/[luuuuuuuu]
    \sig_B'\randpick\sigsign(\currencysk{B}{c_2},\tx_4)\\
\trm{if } \sigverify(\currencypk{C}{c_2},\tx_4,\sig_B') = 0\trm{ return }\bot
    && \ar[ll]_-{\tx_4,\sig_B',\sig_C'}
    \trm{if } \sigverify(\currencypk{C}{c_2},\tx_4,\sig_C')=0\trm{ return
}\bot\\
\trm{if } \sigverify(\currencypk{B}{c_2},\tx_4,\sig_C') = 0\trm{ return }\bot
&&\\
\sig_A'\randpick\sigsign(\currencysk{A}{c_2},\tx_4) \ar[rr]^-{\sig_A'} &&
    \trm{if $\sigverify(\currencypk{A}{c_2},\tx_4,\sig_A') = 0$ return
    $\bot$}\\
&& \trm{add $\tx_3$ to $c_2$ blockchain}\\
&\vdots&\\
\trm{before $t_2$: spend $\tx_3$ using $(x,c_1)$} \ar@{-->}[rr]^-{x} &&
    \trm{before $t_1$: spend $\tx_1$ using $(x,c_2)$}\\
& \vdots & \\
\trm{after $t_2$: refund $\tx_2$ using $(\sig_A,\sig_B,\sig_C)$} & &
    \trm{after $t_1$: refund $\tx_4$ using $(\sig_A',\sig_B',\sig_C')$}
}
}
\end{framed}
\caption{A method for $A$ and $B$ to\dash with the approval of a third party
$C$\dash exchange $m$ units of currency $c_1$ for $n$ units of currency $c_2$
in a fair manner; i.e., in a way such that if either $A$ or $B$ stops
participating at any point in the interaction, the other party loses nothing.}
\label{fig:fx}
\end{figure*}

In Section~\ref{sec:multiple-banks}, we described a protocol for atomic
trading of different currencies and outlined some of its features, such as
allowing trade only across authorized currencies (as determined by some third
party).  Our formal protocol that achieves this fair exchange is presented
in Figure~\ref{fig:fx}.

Informally, if Alice and Bob wish to exchange $m$ units of currency $c_1$
for $n$ units of currency $c_2$,
with the blessing of a third party Carol, then they each create two types of
transactions: a ``spend'' transaction, in which the sender releases the units
of currency to one of two addresses, and a ``refund'' transaction, in which
the sender can reclaim the currency after a certain amount of time has passed.
The two addresses in Alice's spend transactions are a
``multi-signature'' address from which funds can be released only with the
signatures of Alice, Bob, and Carol, or Bob's address, from which he can spend
the funds only with knowledge of the pre-image of some hash $H(x)$.  Her
refund transaction then sends the currency back to Alice's address if
signatures are provided by all three parties, and if an
appropriate amount of time $t_1$ has elapsed since the spend transaction was
accepted into the blockchain.  Similarly, Bob's spend transaction requires
Alice to present the pre-image $x$ in order to redeem the funds, and his
refund transaction can be spent only after some time $t_2$ has passed.

\begin{gather*}
\underline{\spendtx(\vec{h},v,\pk_1,\pk_2,\pk_3)}\\
\addr\gets\left\{\begin{matrix}
\multiaddr(\pk_1,\pk_2,\pk_3) & \trm{if $t > t_1$}\\
\pk_2 & \trm{if $H(x_i) = h[i]~\forall i$}
\end{matrix}\right.\\
\trm{return } (\addr,\txspec{\pk_1}{v}{\addr})\\
~\\
\underline{\refundtx(v,\addr_\inputs,\addr_\outputs)}\\
\trm{return } \txspec{\addr_\inputs}{v}{\addr_\outputs} \\
\end{gather*}

Alice begins by creating her spend and refund transactions, as well as picking
the value $x$ and computing $H(x)$.  She then ``commits'' to the currency
$c_2$ being traded with using a second hash $h_2$ and sends the
refund transaction, signed by herself, to Carol.  If Carol is satisfied with
the proposed exchange, she can sign the transaction and give this signature to
Alice.  Alice now solicits a signature from Bob; once she has signatures from
both Bob and Carol, she now has a transaction that she can use to refund her
currency after time $t_1$ has passed.  Thus, it is safe for her to publish the
spend transaction in the blockchain for $c_1$.  Bob then follows suit by
creating his own spend and refund transactions, soliciting signatures from
Alice and Carol, and publishing his spend transaction once he has a valid
refund transaction that he can use if necessary.

Once both transactions are accepted into their respective blockchains,
Alice\dash who so far is the only one with knowledge of the pre-image $x$\dash
can redeem the $n$ units of currency $c_2$ using Bob's spend transaction;
in doing so, she implicitly reveals $x$.  Thus, Bob can now redeem the $m$
units of currency $c_1$ using Alice's spend transaction and the exchange is
complete.  If Alice does not redeem Bob's spend transaction, then after time
$t_2$ Bob can use his refund transaction to redeem the currency himself (so
it is important that $t_2 < t_1$).

\end{document}